\documentclass[a4paper,UKenglish,cleveref,autoref,thm-restate]{lipics-v2021}
%This is a template for producing LIPIcs articles. 
%See lipics-v2021-authors-guidelines.pdf for further information.
%for A4 paper format use option "a4paper", for US-letter use option "letterpaper"
%for british hyphenation rules use option "UKenglish", for american hyphenation rules use option "USenglish"
%for section-numbered lemmas etc., use "numberwithinsect"
%for enabling cleveref support, use "cleveref"
%for enabling autoref support, use "autoref"
%for anonymousing the authors (e.g. for double-blind review), add "anonymous"
%for enabling thm-restate support, use "thm-restate"
%for enabling a two-column layout for the author/affilation part (only applicable for > 6 authors), use "authorcolumns"
%for producing a PDF according the PDF/A standard, add "pdfa"

\pdfoutput=1 %uncomment to ensure pdflatex processing (mandatatory e.g. to submit to arXiv)
\hideLIPIcs  %uncomment to remove references to LIPIcs series (logo, DOI, ...), e.g. when preparing a pre-final version to be uploaded to arXiv or another public repository

%\graphicspath{{./graphics/}}%helpful if your graphic files are in another directory

\bibliographystyle{plainurl}% the mandatory bibstyle

\usepackage{amsmath,amsfonts}
\usepackage{tikz}
\usetikzlibrary{automata, positioning}

\newcommand{\MDP}{\mathcal{M}}
\newcommand{\States}{S}
\newcommand{\Actions}{\mathit{Act}}
\newcommand{\state}{s}

\newcommand{\infinitepath}{\rho}
\newcommand{\finitepath}{\varrho}

\newcommand{\support}{\mathrm{supp}}

\NewDocumentCommand{\IPaths}{d<>}{\IfValueTF{#1}{\mathsf{IPaths}_{#1}}{\mathsf{IPaths}}}
\NewDocumentCommand{\FPaths}{d<>}{\IfValueTF{#1}{\mathsf{FPaths}_{#1}}{\mathsf{FPaths}}}

\newcommand{\AP}{\mathsf{AP}}

\newcommand{\NBA}{N}
\newcommand{\NBAStates}{Q}
\newcommand{\NBALetters}{\Sigma}
\newcommand{\NBAtransition}{\delta}
\newcommand{\NBAinit}{q_0}
\newcommand{\NBAaccept}{F}

\newcommand{\Tran}{\mathcal{T}}
\newcommand{\Locs}{L}
\newcommand{\Vars}{V}
\newcommand{\locinit}{l_{\text{init}}}
\newcommand{\vecinit}{\theta_{\text{init}}}
\newcommand{\transitions}{\mapsto}
\newcommand{\Guard}{G}
\newcommand{\Update}{U}

\newcommand{\Tranx}{\mathcal{T}^\times}

\newcommand{\Init}{\mathsf{Init}}

\newcommand{\Certificate}{\mathcal{C}}

\title{Omega-regular Verification and Control for Distributional Specifications in MDPs} %TODO Please add

%\titlerunning{Dummy short title} %TODO optional, please use if title is longer than one line

\author{S. Akshay}{Dept of CSE, Indian Institute of Technology Bombay, Mumbai, India}{akshayss@cse.iitb.ac.in}{}{}

\author{Ouldouz Neysari}{Singapore Management University, Singapore \and University of Tehran, Iran}{neysari.ouldouz@ut.ac.ir}{}{}

\author{\DJ or\dj e \v{Z}ikeli\'{c}}{Singapore Management University, Singapore}{dzikelic@smu.edu.sg}{}{}

%{joanrpublic@dummycollege.org}{}{}%{[orcid]}{[funding]}

\authorrunning{S. Akshay, O. Neysari, \DJ. \v{Z}ikeli\'{c}}

%\author{Jane {Open Access}}{Dummy University Computing Laboratory, [optional: Address], Country \and My second affiliation, Country \and \url{http://www.myhomepage.edu} }{johnqpublic@dummyuni.org}{https://orcid.org/0000-0002-1825-0097}{(Optional) author-specific funding acknowledgements}%TODO mandatory, please use full name; only 1 author per \author macro; first two parameters are mandatory, other parameters can be empty. Please provide at least the name of the affiliation and the country. The full address is optional. Use additional curly braces to indicate the correct name splitting when the last name consists of multiple name parts.

%\author{Joan R. Public\footnote{Optional footnote, e.g. to mark corresponding author}}{Department of Informatics, Dummy College, [optional: Address], Country}{joanrpublic@dummycollege.org}{[orcid]}{[funding]}

%\authorrunning{J. Open Access and J.\,R. Public} %TODO mandatory. First: Use abbreviated first/middle names. Second (only in severe cases): Use first author plus 'et al.'

\Copyright{S. Akshay, Ouldouz Neysari, \DJ or\dj e \v{Z}ikeli\'{c}} %TODO mandatory, please use full first names. LIPIcs license is "CC-BY";  http://creativecommons.org/licenses/by/3.0/

\ccsdesc[500]{Theory of computation~Verification by model checking} %TODO mandatory: Please choose ACM 2012 classifications from https://dl.acm.org/ccs/ccs_flat.cfm 

\keywords{MDPs, Distributional objectives, $\omega$-regularity, Certificates} %TODO mandatory; please add comma-separated list of keywords

%\category{} %optional, e.g. invited paper

%\relatedversion{} %optional, e.g. full version hosted on arXiv, HAL, or other respository/website
%\relatedversiondetails[linktext={opt. text shown instead of the URL}, cite=DBLP:books/mk/GrayR93]{Classification (e.g. Full Version, Extended Version, Previous Version}{URL to related version} %linktext and cite are optional

%\supplement{}%optional, e.g. related research data, source code, ... hosted on a repository like zenodo, figshare, GitHub, ...
%\supplementdetails[linktext={opt. text shown instead of the URL}, cite=DBLP:books/mk/GrayR93, subcategory={Description, Subcategory}, swhid={Software Heritage Identifier}]{General Classification (e.g. Software, Dataset, Model, ...)}{URL to related version} %linktext, cite, and subcategory are optional

\funding{This work was supported in part by the Singapore Ministry of Education
(MOE) Academic Research Fund (AcRF) Tier 1 grant (Project ID:22-SIS-SMU-100), Google Research Award 2023 and the SBI Foundation Hub for Data Science \& Analytics, IIT Bombay.}%optional, to capture a funding statement, which applies to all authors. Please enter author specific funding statements as fifth argument of the \author macro.

%\acknowledgements{I want to thank \dots}%optional

\nolinenumbers %uncomment to disable line numbering

%Editor-only macros:: begin (do not touch as author)%%%%%%%%%%%%%%%%%%%%%%%%%%%%%%%%%%
\EventEditors{Patricia Bouyer and Jaco van de Pol}
\EventNoEds{2}
\EventLongTitle{36th International Conference on Concurrency Theory (CONCUR 2025)}
\EventShortTitle{CONCUR 2025}
\EventAcronym{CONCUR}
\EventYear{2025}
\EventDate{August 26--29, 2025}
\EventLocation{Aarhus, Denmark}
\EventLogo{}
\SeriesVolume{348}
\ArticleNo{29}
%%%%%%%%%%%%%%%%%%%%%%%%%%%%%%%%%%%%%%%%%%%%%%%%%%%%%%

\begin{document}

\maketitle

%TODO mandatory: add short abstract of the document
\begin{abstract}
A classical approach to studying Markov decision processes (MDPs) is to view them as state transformers. However, MDPs can also be viewed as distribution transformers, where an MDP under a strategy generates a sequence of probability distributions over MDP states. This view arises in several applications, even as the probabilistic model checking problem becomes much harder compared to the classical state transformer counterpart. It is known that even distributional reachability and safety problems become computationally intractable (Skolem- and positivity-hard). To address this challenge, recent works focused on sound but possibly incomplete methods for verification and control of MDPs under the distributional view. However, existing automated methods are applicable only to distributional reachability, safety and reach-avoidance specifications.

In this work, we present the first automated method for verification and control of MDPs with respect to distributional omega-regular specifications. To achieve this, we propose a novel notion of distributional certificates, which are sound and complete proof rules for proving that an MDP under a distributionally memoryless strategy satisfies some distributional omega-regular specification. We then use our distributional certificates to design the first fully automated algorithms for verification and control of MDPs with respect to distributional omega-regular specifications. Our algorithms follow a template-based synthesis approach and provide soundness and relative completeness guarantees, while running in PSPACE. Our prototype implementation demonstrates practical applicability of our algorithms to challenging examples collected from the literature.
\end{abstract}

\section{Introduction}\label{sec:intro}

Markov decision processes (MDPs) are a standard model for reasoning and sequential decision making in the presence of uncertainty. The verification community has long studied MDPs as state transformers, where their semantics are interpreted over cylinder sets of paths (see e.g.~\cite{BaierK08}). As a result, quantitative verification questions focus on state-based properties, such as the eventual reachability of a state with maximum probability over all MDP strategies. There is a rich body of literature on efficient algorithms for reasoning about state-based properties in MDPs, including model checking over expressive logics such as PCTL*~\cite{KwiatkowskaNP11}.

An orthogonal class of objectives, which forms our focus in this paper, considers properties that are defined over the space of probability distributions over MDP states, rather than the state space of the MDP. This allows one to reason about the movement of the probability mass, for instance, one can say that always in the future the probability mass is equally divided between two bi-stable states. Such objectives, that we call \emph{distributional objectives}, are simpler to reason about under alternative semantics which view MDPs as {\em distribution transformers}. In this view, starting from some
initial distribution over MDP states, the MDP under a strategy induces a sequence of distributions over MDP states, generating a new distribution at each time step.  One can then specify properties with respect to this sequence of distributions, such as distributional reachability or safety. This view naturally arises in several applications, including multi-agent systems~\cite{BaldoniBMR08,AkshayCMZ24} or biochemical reaction networks~\cite{DBLP:conf/qest/KorthikantiVAK10,HenzingerMW09}. However, it turns out that even the simplest distributional properties such as distributional reachability and safety cannot be expressed in PCTL*~\cite{DBLP:journals/logcom/BeauquierRS06}, rendering classical probabilistic model checking algorithms inapplicable to reasoning about distributional specifications. This means that reasoning about distributional properties in MDPs requires new methods.
%
%Moreover, it was shown that the problem of model checking with respect to distributional properties in Markov chains and MDPs is computationally hard, with already distributional reachability being \textsc{Skolem}-hard~\cite{DBLP:journals/siglog/OuaknineW15} and distributional safety being \textsc{Positivity}-hard~\cite{OuaknineW14}, the decidability of both being long-standing open problems.
%
The past decade has seen a rich line of theoretical work on analyzing Markov chains and MDPs as distribution transformers~\cite{DBLP:journals/tse/KwonA11,DBLP:conf/qest/KorthikantiVAK10, DBLP:conf/qest/ChadhaKVAK11,DBLP:conf/lics/AkshayGV18,AkshayCMZ23,AkshayCMZ24,DBLP:journals/jacm/AgrawalAGT15,DBLP:journals/tac/GaoAXJ24}. %Model checking was also considered with respect to different logics, some for only Markov chains~\cite{DBLP:journals/logcom/BeauquierRS06},  others for restricted to subclasses~\cite{DBLP:conf/qest/ChadhaKVAK11}, and yet others focusing on approximate verification~\cite{DBLP:journals/jacm/AgrawalAGT15}. 
However, existing automated methods are restricted to distributional reachability, safety, or reach-avoidance specifications.

In this paper, we present the first automated method for strategy verification and synthesis in MDPs with respect to {\em distributional $\omega$-regular specifications}, significantly extending the class of distributional objectives for which automated methods are available. In doing so, we focus on the verification problem for a given MDP strategy, as well as the control problem which asks to synthesize an MDP strategy which ensures that a distributional $\omega$-regular specification is satisfied. Our strategy verification and synthesis methods are based on the novel notion of {\em distributional certificates} which we introduce in this work. Distributional certificates provide a sound and complete proof rule for proving that an MDP under a distributionally memoryless strategy satisfies the distributional $\omega$-regular specification of interest. We restrict our attention to {\em distributionally memoryless strategies}, which make moves based on the current distribution rather than the state, and which are known to be sufficient for reasoning about distributional reachability and safety~\cite{AkshayCMZ23,AkshayCMZ24}. Our distributional certificates build on the ideas from program verification and certificates such as ranking function~\cite{ColonS01}, invariants~\cite{ColonSS03} or B\"uchi ranking functions~\cite{ChatterjeeGGKZ24}, and bring these ideas to the setting of reasoning about distributional $\omega$-regular specifications in MDPs.

We then present our automated algorithms for strategy verification and synthesis in MDPs with respect to distributional $\omega$-regular specifications. In the setting of distributional objectives, it is known that safety and reachability are already hard, or more precisely, positivity and Skolem-hard~\cite{AkshayAOW15}. The decidability of both are long-standing open problems in linear dynamical systems~\cite{DBLP:conf/rp/OuaknineW12}. As a result, rather than aiming for sound and complete algorithms that would inherently be computationally expensive/infeasible, we adopt a template-based synthesis approach and instead design algorithms that can more efficiently search for distributional certificates and distributionally memoryless strategies that can be expressed in {\em affine arithmetic}. By fixing symbolic affine templates for the certificate and the strategy and by using existing methods for solving quantified formulas over real arithmetic, one is able to reduce the strategy verification and synthesis problems to satisfiability checking in existential theory of the reals, therefore obtaining sound algorithms that run in PSPACE. Furthermore, our algorithms also provide relative completeness guarantees for computing an affine distributional certificate and a memoryless strategy, whenever they exist.

We implement our approach and consider standard benchmarks and examples from the literature, while focusing on several distributional $\omega$-regular specifications for our evaluation. Our results show the practicality of the approach and the potential for future applications.

\smallskip\noindent{\bf Related work.} In addition to the work already mentioned, we discuss a (non-exhaustive list of) few others. Our distributional certificates draw insights from classical certificates for program verification and template-based synthesis algorithms for their computation. Notable examples include ranking functions for proving termination in programs~\cite{ColonS01,ChatterjeeFG16} and invariants for proving safety in programs~\cite{ColonSS03,DBLP:conf/pldi/Chatterjee0GG20}. Our distributional certificates draw insights from B\"uchi ranking functions of~\cite{ChatterjeeGGKZ24} for proving LTL properties in programs. However, there are several important differences. First, we leverage and lift the idea of B\"uchi ranking to the setting of probability distributions over MDP states. Second, our distributional certificates and algorithms for their computation also need to reason about {\em strategies} in MDPs. This is reflected in the following key difference. Our certificates provide soundness and completeness guarantees for all distributional $\omega$-regular specifications and distributionally memoryless strategies, and proving this requires reasoning about reachability under a strategy (see the proof of Theorem~\ref{thm:certificate}). The certificate of~\cite{ChatterjeeGGKZ24}, on the other hand, is complete only for specifications that can be represented via {\em deterministic} B\"uchi automata, if the specification needs to be satisfied from a set of initial states (see Corollary~2 in~\cite{ChatterjeeGGKZ24}).

%Optimal control for MDPs with respect to LTL specifications in the classical state-based setting was already done in~\cite{DBLP:journals/tac/DingSBR14}. More recently, \cite{DBLP:conf/tacas/HahnPSS0W20} define Good-for-MDPs automata that extend limit-determistic automata from~\cite{DBLP:conf/cav/SickertEJK16} and attempt to capture larger classes of NBA, while~\cite{DBLP:conf/ijcai/Kretinsky21} considers the steady-state setting.

In the distributional setting, the probabilistic logics defined in~\cite{DBLP:journals/logcom/BeauquierRS06,DBLP:conf/qest/ChadhaKVAK11,DBLP:journals/jacm/AgrawalAGT15} are all orthogonal to the classical semantics, and the model checking techniques developed are not template-based. To the best of our knowledge, none of these works have been automated. The works~\cite{AkshayCMZ23,AkshayCMZ24} propose certificates for distributional reachability, safety and reach-avoidance and design template-based synthesis algorithms for their computation. Our paper follows this line of work and introduces distributional certificates and template-based algorithms for {\em distributional $\omega$-regular specifications}, hence significantly generalizing the class of distributional properties that we can automatically reason about.

Certificates were also used for reasoning about infinite-state probabilistic models such as probabilistic programs under the state-based view. In particular, supermartingale certificates were proposed for qualitative reachability~\cite{SriramCAV,CFNH16:prob-termination}, quantitative reachability, safety and reach-avoidance~\cite{PrajnaJP07,CNZ17,takisaka2021ranking,ChatterjeeGMZ22,ZikelicLHC23}, and most recently for qualitative $\omega$-regular specifications~\cite{DBLP:conf/cav/AbateGR24}. However, these certificates are not, a priori, applicable to the distributional setting.
 
\section{Preliminaries}\label{sec:prelims}

In this section, we recall the basics of probabilistic systems and Markov decision processes. A {\em probability distribution} over a countable set $X$ is a map $\mu: X \rightarrow [0,1]$ such that $\sum_{x \in X} \mu(x) = 1$. The {\em support} of $X$ is defined via $\support(\mu) = \{x \in X \mid \mu(x) > 0\}$. We use $\Delta(X)$ to denote the set of all probability distributions over $X$.

%\subsection{Markov Decision Processes}\label{sec:mdps}

\smallskip\noindent{\bf MDPs.} A {\em Markov decision process (MDP)} is a tuple $\MDP = (\States, \Actions, P)$. We use $\States$ to denote a finite set of {\em states} and $\Actions$ to denote a finite set of {\em actions}. Slightly overloading the notation, for each state $s \in S$, we write $\Actions(s) \subseteq \Actions$ to denote the set of {\em available actions} at $s$. Finally, $P: \States \times \Actions \rightarrow \Delta(\States)$ is a {\em transition function}, assigning to each state $s$ and available action $a \in \Actions(s)$ a probability distribution over the succcessor states. When $|\Actions(s)|=1$ for each state $s$, we say that $\MDP$ is a {\em Markov chain}. %An example MDP is shown in Fig.~\ref{xxx}, which will serve as our running example throughout this work.

An {\em infinite path} in an MDP is a sequence $\infinitepath = s_1,a_1,s_2,a_2,\dots \in (\States \times \Actions)^\omega$, such that $a_i \in \Actions(s_i)$ and $P(s_i,a_i)(s_{i+1})>0$ for all $i \in \mathbb{N}$. A {\em finite path} $\finitepath$ in an MDP is a finite prefix of an infinite path that ends in a state. We use $\infinitepath_i$ and $\finitepath_i$ to denote the $i$-th state along an (in)finite path. We use $\IPaths<\MDP>$ and $\FPaths<\MDP>$ to denote the sets of all infinite and finite paths in the MDP $\MDP$, respectively.

\smallskip\noindent{\bf Semantics of MDPs.} The semantics of MDPs are formalized in terms of strategies. A {\em strategy} (or {\em policy}) in an MDP $\MDP$ is a function $\pi: \FPaths<\MDP> \rightarrow \Delta(\Actions)$ which to each finite path (called a {\em history}) assigns a probability distribution over the action to be taken next. We require that, if a finite path $\finitepath \in \FPaths<\MDP>$ ends in a state $s$, then $\support(\pi(\finitepath)) \subseteq \Actions(s)$. A strategy is said to be {\em memoryless} if the probability distribution over actions depends only on the last state of the finite path and not on the whole history, i.e.~if $\pi(\finitepath) = \pi(\finitepath')$ whenever $\finitepath$ and $\finitepath'$ end in the same state. %We use $\Strategies$ to denote the set of all strategies and $\StrategiesOS$ to denote the set of all memoryless strategies, respectively.
For every initial state distribution $\mu_0 \in \Delta(\States)$, an MDP $\MDP$ and a strategy $\pi$ together give rise to a probability space over the set of all infinite paths in the MDP~\cite{BaierK08}. We denote by $\mathbb{P}^\pi_{\mu_0}$ the probability measure and by $\mathbb{E}^\pi_{\mu_0}$ the expectation operators in this probability space, omitting the MDP $\MDP$ from the notation when clear from the context.

%Given a strategy $\pi$ and an initial state distribution $\mu_0 \in \Delta(\States)$, an MDP $\MDP$ evolves as follows. First, the initial state $s_1$ is sampled according to the initial state distribution $\mu_0$. Then, at each time step $n \in \mathbb{N}$, if the MDP has traversed a history $\finitepath_n = s_1,a_1,\dots,s_{n-1}, a_{n-1}, s_n$, we first sample the next action $a_n$ according to the probability distribution $\pi(\finitepath_n)$ prescribed by the strategy. Then, the next state of the MDP $s_{n+1}$ is sampled according to the transition function $P(s_n,a_n)$. This proces is repeated ad infinitum, giving rise to an infinite path in the MDP. This construction can be formalized, and the MDP $\MDP$, the strategy $\pi$ and the initial state distribution $\mu_0$ together give tise to a probability space over the set of all infinite paths in the MDP~\cite{BaierK08}. We denote by $\mathbb{P}^\pi_{\mu_0}$ the probability measure and by $\mathbb{E}^\pi_{\mu_0}$ the expectation operators in this probability space, omitting the MDP $\MDP$ from the notation as it will typically be clear from the context.

\smallskip\noindent{\bf MDPs as distribution transformers.} MDPs are typically regarded as random generators of infinite paths, giving rise to a probability space over the set of all infinite paths in the MDP. Classical probabilistic model checking problems then explore the expected behaviour of these randomly generated infinite paths, giving rise to {\em path properties}~\cite{BaierK08}. However, one can also view MDPs as {\em (deterministic) transformers of distributions}.

Consider an MDP $\MDP$, a strategy $\pi$, and an initial state distribution $\mu_0 \in \Delta(\States)$. For each $i \in \mathbb{N}$ and state $s$, define $\mu_i(s) = \mathbb{P}^\pi_{\mu_0}[\rho \in \IPaths<\MDP> \mid \rho_i = s]$, i.e.~the probability that the $i$-th state of a randomly generated infinite path is $s$. We write $\mu_i = \MDP^\pi(\mu_0,i)$ for the induced probability distribution of the $i$-th state of a randomly generated infinite path. Hence, the MDP $\MDP$, a strategy $\pi$, and an initial state distribution $\mu_0 \in \Delta(\States)$ together give rise to a sequence of probability distributions over the MDP states
\[ \mu_0, \quad \mu_1 = \MDP^\pi(\mu_0,1), \quad \mu_2 = \MDP^\pi(\mu_0,2), \quad \mu_3 = \MDP^\pi(\mu_0,3), \quad \dots \]
One can then study properties of this sequence of distributions. Some examples are {\em distributional reachability} and {\em distributional safety}, which ask if the induced sequence of distributions contains or does not contain an element of some specified set of distributions~\cite{AkshayCMZ23,AkshayCMZ24}. %Surprisingly, while reachability and safety in their path interpretation allow for simple, polynomial-time algorithms~\cite{BaierK08}, distributional reachability and safety problems turn out to be surprisingly difficult~\cite{AkshayAOW15}. This leaves an even greater challenge of how one could design efficient and fully automated algorithms for reasoning about more general distributional properties in MDPs, which is currently an open problem. In this work, we tackle this open problem.

%\subsection{Omega-regular Specifications}\label{sec:specifications}

\smallskip\noindent{\bf $\omega$-regular specifications.} In this work we will consider $\omega$-regular specifications, which subsume a broad class of specifications such as those belonging to linear temporal logic (LTL) or computation tree logic (CTL)~\cite{BaierK08}. An {\em $\omega$-regular specification} over a finite set $\AP$ of atomic propositions is defined by a {\em non-deterministic B\"uchi automaton (NBA)} $\NBA = (\NBAStates, \NBALetters, \NBAtransition, \NBAinit, \NBAaccept)$, where $\NBAStates$ is a finite set of states, $\NBALetters = 2^{\AP}$ is a finite set of letters, $\NBAtransition: \NBAStates \times \NBALetters \rightarrow 2^{\NBAStates}$ is a (non-deterministic) transition function, $\NBAinit \in \NBAStates$ is the initial state and $\NBAaccept \subseteq \NBAStates$ are accepting states. An infinite word of letters $\sigma_1,\sigma_2,\dots \in \NBALetters^\omega$ is said to be {\em accepting}, if it gives rise to at least one accepting run in $\NBA$, i.e.~if there exists a run $q_0,q_1,q_2,\dots$ such that $q_{i+1} \in \NBAtransition(q_i,\sigma_i)$ for each $i$ and such that $q_i \in F$ for infinitely many $i$. Note that given an LTL formula $\varphi$, it can be converted to an equivalent NBA $\NBA^\varphi$ in exponential time (see e.g., \cite{BaierK08}). In what follows, we will often write examples and benchmarks in LTL as it will be easier and often more intuitive. But for our analysis, we will convert them to their equivalent NBA and reason only about these NBA as the $\omega$-regular specification. 

\smallskip\noindent{\bf Transition systems.} In order to reason about the synchronous evolution of a sequence of distributions over the MDP states and a run in the NBA, we will later introduce the notion of product distributional transition systems in Section~\ref{sec:certificate}. Hence, we here recall the notion of transition systems, which are commonly used to model imperative numerical programs.

An {\em (infinite-state) transition system} is a tuple $\Tran = (\Locs,\Vars,\locinit,\vecinit,\transitions)$, where 
\begin{itemize}
    \item $\Locs$ is a finite set of locations, 
    \item $\Vars$ is a finite set of real-valued variables, 
    \item $\locinit \in \Locs$ is the initial location, \item $\vecinit\subseteq\mathbb{R}^{|\Vars|}$ is the set of initial variable valuations, and 
    \item $\transitions$ is a finite set of transitions of the form $\tau = (l_\tau,l'_\tau,\Guard_\tau,\Update_\tau)$ with $l_\tau$ a source location, $l'_\tau$ a target location, $\Guard_\tau$ a guard which is a boolean predicate over the variables in $\Vars$, and $\Update_\tau: \mathbb{R}^n \rightarrow \mathbb{R}^n$ an update function.
\end{itemize}
A {\em state} in the transition system is a tuple $(l,x) \in \Locs \times \mathbb{R}^{|\Vars|}$ consisting of a location in $\Locs$ and a valuation of variables in $\Vars$. A transition $\tau = (l_\tau,l'_\tau,\Guard_\tau,\Update_\tau)$ is said to be {\em enabled} at a state $(l,x)$ if $l = l_\tau$ and $x \models \Guard_\tau$. An {\em infinite path} (or a {\em run)} in the transition system is a sequence of states $(l_0,x_0),(l_1,x_1),\dots$ with $l_0 = \locinit$, $x_0 \in \Init$, and where for each $i\in\mathbb{N}_0$ there exists a transition $\tau_i=(l_i,l_{i+1},\Guard_\tau,\Update_\tau)$ enabled at $(l_i,x_i)$ such that $x_{i+1} = \Update_\tau(x_i)$. A state $(l,x)$ is said to be {\em reachable}, if there exists an infinite path that contains $(l,x)$. % A {\em finite path} is a finite prefix of an infinite path. We denote by $\IPaths_\Tran$ and $\FPaths_\Tran$ the sets of all infinite and finite paths in $\Tran$. Similarly to MDPs, the semantics of $\Tran$ are defined by strategies of the form $\pi: \FPaths_\Tran \rightarrow \Delta(\transitions)$, which to every finite path $(l_0,x_0),(l_1,x_1),\dots,(l_i,x_i)$ assigns a probability distribution over transitions enabled in $(l_i,x_i)$.

\section{Problem Statement}\label{sec:problem}

We now formally define the problems that we consider in this work. Our goal is to design fully automated algorithms for formal verification and control in MDPs with respect to distributional $\omega$-regular specifications. Hence, we first need to formalize the notion of distributional $\omega$-regular specifications. In what follows, let $\MDP = (\States, \Actions, P)$ be an MDP.

\smallskip\noindent{\bf Distributional $\omega$-regular specifications.} Similarly to the classical $\omega$-regular specification setting, we first need to specify a finite set of {\em atomic propositions} $\AP$. We are interested in reasoning about a sequence of distributions induced by an MDP under a strategy. Hence, we let the set $\AP$ consist of finitely many logical formulas of the form $\mathsf{exp}(\mu(s_1), \dots, \mu(s_{|S|})) \geq 0$. Here, $\mathsf{exp}:\mathbb{R}^{|S|} \rightarrow \mathbb{R}$ is an arithmetic expression over the probabilities $\mu(s_1), \dots,\mu(s_{|S|})$ of being in each state of the MDP, where $s_1,\dots,s_{|S|}$ is an arbitrary (but throughout fixed) enumeration of MDP states. In practice, we let $\AP$ contain exactly those atomic propositions that appear in the property that we want to reason about. %Once the set of atomic predicates $\AP$ is fixed, the set of {\em distributional LTL specifications} (or {\em formulas}) over $\AP$ is defined analogously as in the classical LTL setting. 
A {\em distributional $\omega$-regular specification} $\varphi$ is then defined by an NBA $\NBA^\varphi = (\NBAStates, \NBALetters, \NBAtransition, \NBAinit, \NBAaccept)$ with $\Sigma = 2^\AP$.

%Note that, unlike the classical LTL setting, our atomic propositions may not be independent. For instance, if $|S| = 2$, $p_1 := (\mu(s_1) -  \mu(s_2) \geq 0)$ and $p_2 := (\mu(s_2) - \mu(s_2) - 0.1 \geq 0)$, then it is impossible to have $p_1 \land p_2$ at any time.

We now define the semantics of distributional $\omega$-regular specifications. Consider a finite set of atomic propositions $\AP$, a distributional $\omega$-regular specification $\varphi$, a strategy $\pi$ and an initial distribution $\mu_0 \in \Delta(S)$ in the MDP $\MDP$. The MDP $\MDP$ under strategy $\pi$ from the initial distribution $\mu_0$ induces an infinite word $\sigma_0,\sigma_1,\sigma_2,\dots$ in the language $2^\AP$ as follows. As defined in Section~\ref{sec:prelims}, an MDP $\MDP$ under strategy $\pi$ from the initial distribution $\mu_0$ induces a sequence $\mu_0,\mu_1,\mu_2,\dots$ of distributions over MDP states. Then, for each $i \in \mathbb{N}_0$, we define the letter $\sigma_i$ as the set of all atomic propositions in $\AP$ that are satisfied at the distribution $\mu_i$, i.e.~$\sigma_i = \{p \in \AP \mid \mu_i \models p\}$, where we use $\models$ to denote proposition satisfaction. We say that the MDP $\MDP$ {\em satisfies} distributional $\omega$-regular specification $\varphi$ under strategy $\pi$ from initial distribution $\mu_0 \in \Delta(S)$, if this infinite word $\sigma_0,\sigma_1,\sigma_2,\dots$ is accepted by the NBA $\NBA^\varphi$.

\smallskip\noindent{\bf Distributionally memoryless strategies.} We restrict our attention to a class of strategies called distributionally memoryless strategies.  A strategy $\pi: \FPaths<\MDP> \rightarrow \Delta(\Actions)$ is said to be {\em distributionally memoryless} if the probability distribution over actions prescribed by the strategy depends only on the current distribution over the MDP states and not on the whole history. Formally, we require that for any initial distribution $\mu_0 \in \Init$ and for any two finite runs $\rho = s_0,a_0,s_1,a_1,\dots,s_n$ and $\rho' = s_0',a_0',s_1',a_1',\dots,s_n'$ that induce the sequences of probability distributions $\mu_0,\mu_1,\dots,\mu_n$ and $\mu_0',\mu_1',\dots,\mu_n'$ with $\mu_n = \mu_n'$, we have $\pi(\rho) = \pi(\rho')$. When the strategy $\pi$ is distributionally memoryless, we write $\MDP^\pi(\mu) = \MDP^\pi(\mu,1)$ to denote an application of a single-step distribution transformer operator.

It was shown in~\cite{AkshayCMZ23,AkshayCMZ24} that distributionally memoryless strategies are sufficient for reasoning about distributional reachability, safety and reach-avoid specifications. That is, for each of these distributional specifications, there exists a strategy in the MDP under which the specification is satisfied if and only if there exists a distributionally memoryless strategy in the MDP under which the specification is satisfied. While this result need not necessarily hold for distributional $\omega$-regular specifications, the restriction will be needed for enabling automated verification and synthesis as they can be represented in a more compact form.

%\smallskip\noindent{\em Satisfaction of distributional LTL specifications in MDPs.} We next define the semantics of distributional LTL specifications. Consider a strategy $\pi$ and a set of initial distributions $\Init$ in the MDP $\MDP$. Let $\phi$ be a distributional LTL specification in $\MDP$ defined over atomic propositions $\AP$.

%For a fixed initial distribution $\mu_0 \in \Init$, we say that the MDP $\MDP$ {\em satisfies} $\phi$ under strategy $\pi$, if for every initial distribution $\mu_0 \in \Init$, the induced sequence of probability distributions over MDP states satisfies $\phi$, i.e.~if
%\[ \Big(\MDP^\pi(\mu_0,i)\Big)_{i=0}^\infty \,\modelsLTL\, \phi. \]
%Here, we use $\modelsLTL$ to denote the classical LTL formula satisfaction.

\smallskip\noindent{\bf Problem statement.}  We are now ready to define our strategy verification and synthesis problems. Consider an MDP $\MDP$, a set of initial distributions $\Init \subseteq \Delta(S)$, and a distributional $\omega$-regular specification $\varphi$:
\begin{enumerate}
	\item {\bf Strategy verification problem.} Given a distributionally memoryless strategy $\pi$, verify that the MDP $\MDP$ satisfies $\varphi$ under $\pi$ from every initial distribution $\mu_0 \in \Init$.
	\item {\bf Strategy synthesis problem.} Compute a distributionally memoryless strategy $\pi$, such that the MDP $\MDP$ satisfies $\varphi$ under $\pi$ from every initial distribution $\mu_0 \in \Init$.
\end{enumerate}

\begin{figure}[t]
\centering
\begin{tikzpicture}[shorten >=1pt, node distance=2cm, on grid, auto]
    % Nodes (states)
    \node[state] (s1)   {A}; 
    \node[state] (s2) [right=of s1] {B}; 
    \node[state] (s3) [right=of s2] {C}; 

    % Transitions
    \path[->]
        (s1) edge [loop above] node {a} (s1)
             edge node {b} (s2)
        (s2) edge node {1} (s3)
        (s3) edge [bend left, below] node {0.5} (s1)
             edge [loop above, above] node {0.5} (s3);
\end{tikzpicture} 
\caption{An MDP which will serve as our running example. The MDP contains three states $\States = \{A,B,C\}$, two actions $\Actions = \{a,b\}$ with $\Actions(A) = \{a,b\}$, $\Actions(B) = \{a\}$, $\Actions(C) = \{a\}$, and its transition function is defined via $P(A,a)(A) = 1$, $P(A,b)(B) = 1$, $P(B,a)(C) = 1$, $P(C,a)(C) = P(C,a)(A) = 0.5$. We consider a singleton initial distribution set $\Init = \{(A: \tfrac{1}{3},\; B: \tfrac{1}{3},\; C: \tfrac{1}{3})\}$.}
\label{fig:running}
\end{figure}
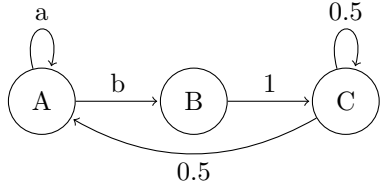

\begin{example}[Running example] \label{ex:running}
The MDP shown in Fig.~\ref{fig:running} was considered in~\cite{AkshayCMZ23} for studying distributional safety specifications and it will serve as our running example. We consider the strategy synthesis and verification problems with respect to the distributional $\omega$-regular specification $\varphi = \mathbf{G}\, \mathbf{F}\, (p(B) \geq 0.249$). For readability, we specify $\varphi$ as an LTL formula over the set of atomic propositions $\AP = \{(\mu(B) \geq 0.249)\}$. This is an example of a {\em distributional persistence specification}, which specifies that the sequence of distributions $\mu_0,\mu_1,\dots$ should contain infinitely many distributions $\mu_i$ with $\mu_i(B) \geq 0.249$.
%
%\begin{SCfigure}[1][!h]
%\begin{wrapfigure}{l}{0.4\textwidth} 
%\end{wrapfigure}
%\end{SCfigure}
The goal of strategy synthesis is to compute a strategy under which $\varphi$ is satisfied. An example of such a strategy is a (distributionally) memoryless strategy $\pi$ which in state $A$ takes action $b$ with probability $1$. The goal of strategy verification is to verify this claim.  %%
%The initial distribution over the states is given by:
%\[\mu_0 = \{ A \rightarrow \tfrac{1}{3},\; B \rightarrow \tfrac{1}{3},\; C \rightarrow \tfrac{1}{3} \}.
%\]
%\noindent Taking actions based on the chosen policy induces a probability distribution over states. In our example, assuming action b is chosen at state A, the distribution at the $i+1$-th step can be written as follows:
%\begin{align*}
%\mu_{i+1}(A) &= 0.5 \cdot \mu_i(C), \\
%\mu_{i+1}(B) &= \mu_i(A), \\
%\mu_{i+1}(C) &= \mu_i(B) + 0.5 \cdot \mu_i(C).
%\end{align*}
\end{example}

\begin{remark}[Problem hardness]\label{rmk:hardness}
    The problem of determining if an MDP $\MDP$ under a distributionally memoryless strategy $\pi$ satisfies a distributional $\omega$-regular specification $\varphi$ is computationally hard. It was shown to be Skolem-hard already in the very restricted setting when $\MDP$ is a Markov chain (so the strategy $\pi$ is trivial) and $\varphi$ is a distributional reachability specification for an affine set of goal distributions $\{\mu\in\Delta(S) \mid \mu(s_1) = 0.25\}$~\cite{AkshayAOW15}.
\end{remark}

\begin{remark}[Universal and existential satisfaction problems]\label{rmk:univexist}
	In the terminology of~\cite{AkshayCMZ24} which considered distributional reachability and safety specifications, our problem corresponds to the {\em universal satisfaction} setting, where the specification needs to be satisfied from {\em every} initial distribution $\mu_0 \in \Init$. Dually, one can also consider the {\em existential satisfaction} setting, where the specification needs to be satisfied from {\em at least one} initial distribution $\mu_0 \in \Init$. While we will focus on the universal satisfaction setting for ease of presentation, we also show that all our results straightforwardly extend to the existential satisfaction setting as well.
\end{remark}

\begin{remark}[Memoryless vs distributionally memoryless strategies]\label{rmk:memory}
    Note that distributionally memoryless strategies are {\em not necessarily} memoryless (in the ``classical'' state-based sense). This fact was already shown in~\cite{AkshayCMZ23} for distributional safety specifications, where one may require infinite memory as well as randomized strategies in order to satisfy the specification. This is in stark contrast with the state-based view, where deterministic memoryless strategies are sufficient for specifications such as reachability and safety~\cite{BaierK08}. %The reason for restricting ourselves to studying distributionally memoryless strategies, however, is that they can be represented and synthesized in a more compact form. This will be important later when designing our automated algorithms for control and verification in Section~\ref{sec:algo}.
\end{remark}

\section{Certificate for Distributional $\omega$-regular Specifications}\label{sec:certificate}

We now present our sound and complete certificate for proving that an MDP under a distributionally memoryless strategy satisfies some distributional $\omega$-regular specification, which is the main theoretical contribution of this work. In Section~\ref{sec:algo}, we will present our algorithms for automated synthesis and verification of strategies in MDPs with respect to distributional $\omega$-regular specifications, where the certificate will play a central role.

In what follows, we fix an MDP $\MDP = (\States, \Actions, P)$, a set of initial distributions $\Init$, a distributionally memoryless strategy $\pi$, and a distributional $\omega$-regular specification $\varphi$ defined over atomic propositions $\AP$ with an NBA $\NBA^\varphi = (\NBAStates, \NBALetters, \NBAtransition, \NBAinit, \NBAaccept)$ where $\NBALetters = 2^\AP$.

% Our certificate is defined over the states of a distributional product automaton (DistPA), which allows us to synchronously reason about the sequence of distributions induced by an MDP strategy and an infinite word in the NBA that it induces. It is also defined with respect to a class of strategies that we call DistPA-based strategies. Hence, in Section~\ref{sec:product}, we first introduce DistPAs. Then, in Section~\ref{sec:product}, we introduce DPA-based strategies and show that they are sufficient for reasoning about distributional $\omega$-regular specifications. Finally, in Section~\ref{sec:certdefinition}, we introduce our novel certificates and establish their soundness and completeness for proving that a distributional $\omega$-regular specification in an MDP is satisfied under a distributionally memoryless strategy.

\subsection{Product Distributional Transition System}\label{sec:product}

Recall from Section~\ref{sec:prelims} that, for each initial distribution in $\Init$, the MDP $\MDP$ and the strategy $\pi$ induce a sequence of distributions over the MDP states. This sequence gives rise to an infinite word in the language $2^\AP$ and a run in the NBA $N^\varphi$. In what follows, we introduce product distributional transition systems (PDTS), which will allow us to synchronously reason about the distribution sequence and the NBA run.

\begin{definition}[Product distributional transition system]\label{def:product}
    Let $\MDP = (\States, \Actions, P)$ be an MDP, $\Init$ be a set of initial distributions, $\pi$ be a distributionally memoryless strategy, and $\NBA^\varphi = (\NBAStates, 2^\AP, \NBAtransition, \NBAinit, \NBAaccept)$ be an NBA for some distributional $\omega$-regular specification $\varphi$ defined over atomic propositions $\AP$. 
    A product distributional transition system (PDTS) is a transition system 
    $\Tranx = (\Locs^\times,\Vars^\times,\locinit^\times,\vecinit^\times,\transitions^\times)$, where
\begin{itemize}
    \item $\Locs^\times = \NBAStates$ is the set of states of $\NBA^\varphi$;
    \item $\Vars^\times = \{\mu_1,\dots,\mu_{|\States|}\}$ is a finite state of real-valued variables, with each variable $\mu_i$ corresponding to the probability of being in an MDP state $s_i$;
    \item $\locinit^\times = \NBAinit$ is the initial state of $\NBA^\varphi$;
    \item $\vecinit^\times = \Init$ is the set of initial distributions in $\MDP$; and
    \item $\transitions^\times = \{(q,q',\Guard(\sigma),\MDP^\pi) \mid q,q'\in Q, \sigma\in 2^\AP,q'\in\delta(q,\sigma)\}$, where $\Guard(\sigma) = (\land_{p \in \sigma} p) \land(\land_{p \in \AP \backslash \sigma} \neg p)$ is the predicate defined by atomic propositions contained in $\sigma$, and $\MDP^\pi$ is the linear function defined by the single-step distribution transformer operator of $\MDP$ and~$\pi$.
\end{itemize}
\end{definition}

\begin{example}\label{ex:pdts}
Fig.~\ref{fig:pdts} left shows the NBA for the distributional specification $\varphi = \mathbf{G}\, \mathbf{F}\, (p(B) \geq 0.249)$ considered in Example~\ref{ex:running}. Fig.~\ref{fig:pdts} right then shows the PDTS of our running example MDP in Fig.~\ref{fig:running} and the NBA. The PDTS has the same set of locations $\Locs^\times = \{q_0,q_1\}$ as the NBA and the set of variables $\Vars^\times = \{\mu_1,\mu_2,\mu_3\}$ corresponding to the probabilities of being in MDP states $A, B, C$. The initial location is $\locinit = \NBAinit$ and the set of initial distributions is $\Init = \{(A: \tfrac{1}{3},\; B: \tfrac{1}{3},\; C: \tfrac{1}{3})\}$. Finally, the three PDTS transitions are shown in Fig.~\ref{fig:pdts}.
\end{example}

Note that PDTS indeed models a synchronous execution of a sequence of distributions over MDP states and a run in the NBA. Each infinite path $(q_0,\mu_0),(q_1,\mu_1),\dots$ in the PDTS starts from a state $(q_0,\mu_0) \in \{\NBAinit\}\times\Init$. Then, for each state $(q_i,\mu_i)$ along the infinite path, the next state is obtained by applying some enabled transition $(q_i,q_{i+1},\Guard(\sigma_i),\MDP^\pi)$. For the transition to be enabled, we must have $\sigma_i = \{p \in \AP \mid \mu_i \models p\}$ be the unique letter defined by all atomic propositions satisfied in $\mu_i$. The PDTS then moves to a state $(q_{i+1},\mu_{i+1})$, where $\mu_{i+1} = \MDP^\pi(\mu_i)$ is indeed the next distribution in the sequence and $q_{i+1} \in \delta(q_i,\sigma_i)$ is indeed a successor state in the NBA.

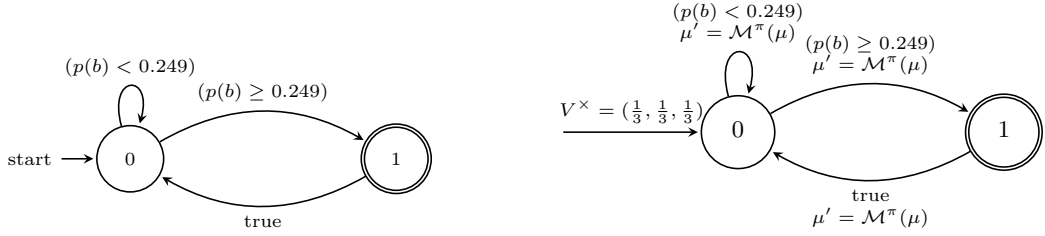
\begin{figure}[t]
\centering

\begin{subfigure}[t]{0.4\textwidth}
    \centering
    
    \tikzset{every node/.style={font=\scriptsize}}\begin{tikzpicture}[->,>=stealth,shorten >=1pt,auto,node distance=3.5cm, semithick]
    
        \node[state,initial] (q0) {$0$};
        \node[state, accepting, right of=q0] (q1) {$1$};

        \path (q0) edge[loop above] node {$(p(b) < 0.249)$} (q0)
              (q0) edge[bend left] node { $(p(b) \geq 0.249)$} (q1)
              (q1) edge[bend left] node {true} (q0);
    \end{tikzpicture}

\end{subfigure}
\hfill
\begin{subfigure}[t]{0.5\textwidth}
    \centering
    \tikzset{every node/.style={font=\scriptsize}}\begin{tikzpicture}[->,>=stealth,shorten >=1pt,auto,node distance=3.5cm, semithick]
    
        \node[state ] (q0) {
          \begin{tabular}{c}
            $0$ \\
            %$\Vars^\times  = (\mu_1, \mu_2, \mu_3)$
        \end{tabular}
        };
  %{$0$  $\Vars^\times  = (\mu_1, \mu_2, \mu_3)$};
        %\node[below=0.15cm of q0,left = 0.1cm] {$\Vars^\times  = (\mu_1, \mu_2, \mu_3)$};
        \node[state, accepting, right =2.5cm of q0] (q1) {
            \begin{tabular}{c}
                $1$ 
            \end{tabular}
            };
        \draw 
            (-2.3,0) ->
            node[pos=0.5, above] {$\Vars^\times  = (\frac{1}{3}, \frac{1}{3}, \frac{1}{3})$} 
            (q0);
        \path (q0) edge [loop above,align=center] node { $(p(b) <           0.249)$\\ $\mu' = \MDP^\pi(\mu)$ } (q0)
              (q0) edge[bend left,align=center] node {  $(p(b) \geq     0.249)$ \\  $\mu' = \MDP^\pi(\mu)$ } (q1)
              (q1) edge[bend left,align=center] node { true \\ $\mu' = \MDP^\pi(\mu)$} (q0);              
    \end{tikzpicture}
    %\vspace{3cm}
\end{subfigure}
\caption{The figure on the left shows the NBA for distributional specification $\varphi = \mathrm{GF}(p(b) \geq 0.249)$. The figure on the right then shows the PDTS of the MDP in Fig.~\ref{fig:running} with the  strategy that in state $A$ takes action $b$ with probability $1$, and the NBA on the left. Each PDTS transition is labeled with its guard (top line) and its update function (bottom line). We write $\mu' = \MDP^\pi(\mu)$ as a shorthand notation for $\mu' = \{A: 0.5 \cdot \mu(C), B: \mu(A), C: \mu(B) + 0.5 \cdot \mu(C)\}$.} 
\label{fig:pdts}
\end{figure}

%\begin{tikzpicture}[shorten >=1pt, node distance=4cm, on grid, auto]
    % Nodes (states)
%    \node[state, initial] (q0)   {q0}; 
%    \node[state] (q1) [right=of q0] {q1}; 

    % Transitions
%    \path[->]
%        (q0) edge [loop above] node {$p(B)< 0.249$} (q0)
%             edge node {$p(b) \geq 0.249$} (q1)
%        (q1) edge [bend left] node {} (q0) 
%\end{tikzpicture}
    
An infinite path $(q_0,\mu_0),(q_1,\mu_1),\dots$ in the PDTS is {\em accepting} if $(q_i,\mu_i) \in \NBAaccept \times \Delta(S)$ for infinitely many $i \in \mathbb{N}_0$, i.e.~if it visits states with locations belonging to the accepting set of the NBA infinitely often. The following proposition formalizes the relationship between the satisfaction of a distributional $\omega$-regular specification in the MDP and the existence of an accepting infinite path in the PDTS. We state the proposition for a single initial distribution $\mu_0 \in \Init$, so that it is applicable both in the universal and the existential satisfaction problem settings (see Remark~\ref{rmk:univexist}).

\begin{proposition}\label{prop:product}
    An MDP $\MDP$ with an initial distribution $\mu_0 \in \Delta(S)$ satisfies a distributional $\omega$-regular specification $\varphi$ under a distributionally memoryless strategy $\pi$ if and only if there exists an accepting infinite path in the PDTS $\Tranx = (\Locs^\times,\Vars^\times,\locinit^\times,\vecinit^\times,\transitions^\times)$.
\end{proposition}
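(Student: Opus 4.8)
The plan is to prove the equivalence by establishing a tight correspondence between the objects on both sides: the infinite word $\sigma_0, \sigma_1, \dots$ together with its accepting NBA run on one side, and the accepting PDTS path on the other. The key observation, already noted in the paragraph preceding the proposition, is that the PDTS is a \emph{deterministic} distribution transformer in its variable component (since $\pi$ is distributionally memoryless, $\mu_{i+1} = \MDP^\pi(\mu_i)$ is uniquely determined), while the non-determinism lives entirely in the location component inherited from $\NBA^\varphi$. First I would fix $\mu_0 \in \Delta(S)$ and use the distributionally memoryless strategy $\pi$ to extract the \emph{unique} induced sequence of distributions $\mu_0, \mu_1, \mu_2, \dots$, with $\mu_{i+1} = \MDP^\pi(\mu_i)$, exactly as in the semantics defined in Section~\ref{sec:problem}. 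This sequence depends only on $\mu_0$ and $\pi$, not on any resolution of non-determinism, which is precisely what the distributionally memoryless restriction buys us.

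Next I would observe that each $\mu_i$ determines a unique letter $\sigma_i = \{p \in \AP \mid \mu_i \models p\}$, and that the guards $\Guard(\sigma)$ in Definition~\ref{def:product} partition the space of distributions: for each $\mu_i$ exactly one guard $\Guard(\sigma_i)$ is satisfied, namely the one for $\sigma_i$ equal to the set of atomic propositions holding at $\mu_i$. This is the bridge between the two formalisms. For the forward direction, I would assume $\MDP$ satisfies $\varphi$ under $\pi$ from $\mu_0$, so the word $\sigma_0, \sigma_1, \dots$ is accepted by $\NBA^\varphi$, witnessed by an accepting run $q_0, q_1, q_2, \dots$ with $q_{i+1} \in \NBAtransition(q_i, \sigma_i)$ and $q_i \in \NBAaccept$ infinitely often. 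I would then verify that the sequence $(q_0, \mu_0), (q_1, \mu_1), \dots$ is a legal infinite PDTS path: the transition $(q_i, q_{i+1}, \Guard(\sigma_i), \MDP^\pi)$ lies in $\transitions^\times$ because $q_{i+1} \in \NBAtransition(q_i, \sigma_i)$, it is enabled at $(q_i, \mu_i)$ because $\mu_i \models \Guard(\sigma_i)$, and the update yields $\mu_{i+1} = \MDP^\pi(\mu_i)$. Since $q_i \in \NBAaccept$ infinitely often, this PDTS path is accepting.

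For the converse, I would take an accepting PDTS path $(q_0, \mu_0), (q_1, \mu_1), \dots$ and read off, for each $i$, the letter $\sigma_i$ associated with the enabled guard $\Guard(\sigma_i)$; the uniqueness of the satisfied guard forces $\sigma_i = \{p \in \AP \mid \mu_i \models p\}$. Because the update is the deterministic transformer $\MDP^\pi$, the distribution component of the path must coincide with the canonical induced sequence starting from $\mu_0$, so the word it generates is exactly the word induced by $\MDP$ under $\pi$ from $\mu_0$. The location sequence $q_0, q_1, \dots$ is then an accepting run of $\NBA^\varphi$ on this word, witnessing satisfaction of $\varphi$. The main obstacle, and the step deserving the most care, is the determinism/uniqueness argument in this direction: I must argue that although the PDTS is non-deterministic as a transition system, any path's variable sequence is forced to be the unique induced distribution sequence, so that the accepting path genuinely corresponds to an accepting run \emph{on the induced word} rather than on some other word. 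This hinges crucially on $\pi$ being distributionally memoryless, guaranteeing $\MDP^\pi$ is a well-defined single-step function of the current distribution alone; without it, $\mu_{i+1}$ would not be determined by $\mu_i$ and the correspondence would break down.
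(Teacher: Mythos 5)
Your proposal is correct and follows essentially the same route as the paper's proof: both directions translate between an accepting NBA run on the induced word and an accepting PDTS path via the guard/update structure of Definition~\ref{def:product}. Your added emphasis on the uniqueness of the satisfied guard and the determinism of $\MDP^\pi$ in the converse direction makes explicit exactly the two facts the paper invokes by appeal to ``the definition of transition guards'' and ``the definition of transition update functions.''
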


\begin{proof}
    Suppose that MDP $\MDP$ with initial distribution $\mu_0$ satisfies distributional $\omega$-regular specification $\varphi$ under distributionally memoryless strategy $\pi$. Let $\mu_0, \mu_1, \dots$ be the sequence of distributions induced by the MDP $\MDP$ under strategy $\pi$, and let $\sigma_0, \sigma_1, \dots$ be the induced infinite word from the initial distribution $\mu_0$. By the definition of satisfiability of distributional $\omega$-regular specifications in Section~\ref{sec:problem}, the infinite word $\sigma_0, \sigma_1, \dots$ is accepted by the NBA $N^\varphi$. Hence, there exists a run $q_0,q_1,\dots$ in $N^\varphi$ such that $q_{i+1} \in \delta(q_i,\sigma_i)$ for each $i \in \mathbb{N}$. But this also means that $(q_0,\mu_0),(q_1,\mu_1),\dots$ is an accepting path in the PDTS $\Tranx = (\Locs^\times,\Vars^\times,\locinit^\times,\vecinit^\times,\transitions^\times)$, which proves one direction of the proposition.

Conversely, suppose that there exists an accepting infinite path $(q_0,\mu_0),(q_1,\mu_1),\dots$ in the PDTS $\Tranx = (\Locs^\times,\Vars^\times,\locinit^\times,\vecinit^\times,\transitions^\times)$. Then, by the definition of transition update functions in the PDTS, we know that $\mu_0,\mu_1,\dots$ is the sequence of distributions induced by the MDP $\MDP$ under strategy $\pi$. Moreover, by the definition of transition guards in the PDTS, we know that $q_{i+1} \in \delta(q_i,\sigma_i)$ for each $i \in \mathbb{N}$ with $\sigma_i$ being the unique letter defined by atomic propositions in $\AP$ satisfied in $\mu_i$. Hence, $q_0,q_1,\dots$ is an infinite run in the NBA $\NBA^\varphi$ induced by the infinite word $\sigma_0,\sigma_1,\dots$. But from the fact that $(q_0,\mu_0),(q_1,\mu_1),\dots$ is an accepting run in the PDTS, it follows that $q_i \in \NBAaccept$ for infinitely many $i \in \mathbb{N}$ and so the infinite word $\sigma_0,\sigma_1,\dots$ is accepted by the NBA $\NBA^\varphi$. By the definition of satisfiability of distributional $\omega$-regular specifications in Section~\ref{sec:problem}, this implies that MDP $\MDP$ with initial distribution $\mu_0$ satisfies specification $\varphi$ under strategy $\pi$, which concludes our proof.
\end{proof}

\subsection{Distributional Certificates}\label{sec:certdefinition}

We are now ready to define our notion of distributional certificates. A {\em distributional certificate} is a pair $(\mathcal{C},I)$ that consists of two components -- a {\em distributional B\"uchi ranking function} $\mathcal{C}$ and a {\em distributional invariant} $I$. The distributional B\"uchi ranking function $\mathcal{C}: \NBAStates \times \Delta(S) \rightarrow \mathbb{R}$ is a function that to each state of the PDTS assigns a real value, which is required to satisfy two conditions. First, the {\em Initial condition} requires the value of $\mathcal{C}$ to be non-negative at all initial states of the PDTS. Second, the {\em B\"uchi ranking condition} requires that, for every reachable state in the PDTS at which the value of $\mathcal{C}$ is non-negative, there exists at least one successor state at which non-negativity is preserved. Furthermore, the value of $\mathcal{C}$ decreases by at least $1$ if the state is not contained in the accepting set of the PDTS. We prove in Theorem~\ref{thm:certificate} below that these two conditions are necessary and sufficient to ensure that, for every initial state $(\NBAinit,\mu_0)$ in the PDTS, there exists an accepting infinite path in the PDTS.

Note that the B\"uchi ranking condition needs to be satisfied only at reachable states of the PDTS. However, the problem of determining the exact set of reachable states is not feasible. Hence, with later automation in mind, we append our certificate with a distributional invariant $I \subseteq \NBAStates \times \Delta(S)$, which is a set that over-approximates the set of reachables states in the PDTS. This is ensured by extending the Initial condition to require that all initial states of the PDTS are contained in the invariant $I$, and by extending the B\"uchi ranking condition to require that the successor state described above is also contained in the invariant $I$. 

The following definition formalizes this intuition. In what follows, for each letter $\sigma \in 2^\AP$ and distribution $\mu \in \Delta(S)$, we write $\mu \models \Guard(\sigma)$ as a shorthand for $\mu \models (\land_{p \in \sigma} p) \land(\land_{p \in \AP \backslash \sigma} \neg p)$.

%To obtain the accepting infinite path, one should start with the initial state $(\NBAinit,\mu_0)$. Then, for each state $(q,\mu)$ in the path, the successor state along the path is chosen as in the B\"uchi ranking condition. This ensures that the value of $\mathcal{C}$ remains non-negative along the path, while it always decreases by at least $1$ when the path is not in an accepting state. Thus, whenever the path reaches a state outside of the accepting set, it must eventually reach again a state in the accepting

 %such that the value of $\mathcal{C}$ remains non-negative while decreasing by at least $1$ whenever the run is outside of the accepting set. Hence, by Proposition~\ref{prop:product}, it follows that the distributional $\omega$-regular specification $\varphi$ is satisfied in the MDP $\MDP$ under a distributionally memoryless strategy $\pi$ for every initial distribution $\mu_0 \in \Init$.

\begin{definition}[Distributional certificate]\label{def:certificate}
A distributional certificate for an MDP $\MDP$ with a set of initial distributions $\Init$, a distributionally memoryless strategy $\pi$, and a distributional $\omega$-regular specification $\varphi$ with NBA $\NBA^\varphi$, is a tuple $(\Certificate,I)$ consisting of a function $\Certificate: \NBAStates \times \Delta(S) \rightarrow \mathbb{R}$ and a set $I \subseteq \NBAStates \times \Delta(S)$, such that the following conditions hold:
\begin{itemize}
    \item {\em Initial condition.} For all $\mu_0 \in \Init$, we have $\Certificate(q_0,\mu_0) \geq 0$ and $(q_0,\mu_0) \in I$.
    \item {\em B\"uchi ranking condition.} We have the following:
    \begin{itemize}
       \item  {\em Non-negativity at accepting states.} For all NBA states  $q \in \NBAaccept$ and letters $\sigma \in 2^\AP$,
        \begin{equation}\label{eq:cond2}
        \begin{split}
            \forall \mu \in \mathbb{R}^{|S|}.\, \bigvee_{q'\in\delta(q,\sigma)} &\mu \in \Delta(S) \,\land\, \mu \models \Guard(\sigma) \,\land\, \Certificate(q,\mu) \geq 0 \,\land\, (q,\mu) \in I \\
            &\Longrightarrow \Certificate(q',\MDP^\pi(\mu)) \geq 0 \,\land\, (q',\MDP^\pi(\mu)) \in I.
        \end{split}
        \end{equation}
        \item {\em Strict decrease and non-negativity at non-accepting states.} For all NBA states  $q \not\in \NBAaccept$ and letters $\sigma \in 2^\AP$,
        \begin{equation}\label{eq:cond3}
        \begin{split}
            \forall \mu \in \mathbb{R}^{|S|}.\, \bigvee_{q'\in\delta(q,\sigma)} &\mu \in \Delta(S) \,\land\, \mu \models \Guard(\sigma) \,\land\, \Certificate(q,\mu) \geq 0 \,\land\, (q,\mu) \in I \\
            &\Longrightarrow \Certificate(q,\mu) - 1 \geq \Certificate(q',\MDP^\pi(\mu)) \geq 0  \,\land\, (q',\MDP^\pi(\mu)) \in I.
        \end{split}
        \end{equation}
    \end{itemize}
\end{itemize}
\end{definition}

The following theorem establishes that distributional certificates provide a sound and complete proof rule for proving that an MDP under a distributionally memoryless strategy satisfies a distributional $\omega$-regular specification.

\begin{theorem}[Soundness and completeness]\label{thm:certificate}
	An MDP $\MDP$ with a set of initial distributions $\Init$ under a distributionally memoryless strategy $\pi$ satisfies a distributional $\omega$-regular specification $\varphi$ if and only if there exists a distributional certificate for $\MDP$, $\Init$, $\pi$ and $\varphi$.
\end{theorem}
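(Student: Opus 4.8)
The plan is to prove both directions by leveraging Proposition~\ref{prop:product}, which already reduces specification satisfaction to the existence of an accepting infinite path in the PDTS. Thus the theorem amounts to showing that a distributional certificate $(\Certificate, I)$ exists if and only if every initial state $(\NBAinit, \mu_0)$ with $\mu_0 \in \Init$ admits an accepting infinite path in $\Tranx$. I would structure the argument as two lemmas, one for soundness (certificate $\Rightarrow$ accepting path) and one for completeness (accepting path $\Rightarrow$ certificate).

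For \textbf{soundness}, suppose a certificate $(\Certificate, I)$ exists. Fix $\mu_0 \in \Init$; by the Initial condition, $\Certificate(\NBAinit, \mu_0) \geq 0$ and $(\NBAinit, \mu_0) \in I$. The plan is to construct an accepting path greedily: at each state $(q, \mu)$ reachable along the construction, the unique letter $\sigma$ satisfied at $\mu$ determines which clause of the B\"uchi ranking condition applies, and the disjunction over $q' \in \delta(q,\sigma)$ guarantees a successor $(q', \MDP^\pi(\mu))$ that lies in $I$ and at which $\Certificate$ stays non-negative (and strictly decreases by at least $1$ when $q \notin \NBAaccept$). This yields an infinite path with all states in $I$ and non-negative $\Certificate$-value. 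To see it visits $\NBAaccept$ infinitely often, I would argue by contradiction: if from some point on the path never enters an accepting location, then every step applies the strict-decrease clause, so $\Certificate$ decreases by at least $1$ at each step while remaining non-negative, which is impossible after finitely many steps. Hence accepting locations recur infinitely often, giving an accepting path. Applying this to each $\mu_0 \in \Init$ and invoking Proposition~\ref{prop:product} completes this direction.

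For \textbf{completeness}, assume the MDP satisfies $\varphi$ under $\pi$ from every $\mu_0 \in \Init$; by Proposition~\ref{prop:product}, every initial state of the PDTS admits an accepting infinite path. I would take $I$ to be exactly the set of reachable states of the PDTS (the tightest sound over-approximation), which automatically satisfies the invariant-membership parts of both conditions. The substantive step is defining $\Certificate$. For the ranking value at a reachable state $(q,\mu)$, the natural candidate is something like the minimal number of steps needed, along a suitable accepting strategy in the PDTS, to reach the next accepting state, measured over a memoryless choice of accepting successors. Concretely, I would fix for each reachable state a choice of outgoing transition that keeps the path on an accepting run (possible precisely because an accepting path exists from every reachable state, which itself needs a short reachability argument), and define $\Certificate(q,\mu)$ as the distance to the nearest accepting location under these fixed choices, with accepting locations assigned value $0$ (or any non-negative value). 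The strict-decrease-by-one property then follows because each non-accepting step decreases this distance by exactly one, and non-negativity is immediate.

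The \textbf{main obstacle} is the completeness direction, specifically the well-definedness and finiteness of the ranking function over a possibly uncountable, infinite-state PDTS (the variable space $\Delta(S)$ is continuous). The existence of an accepting path from each reachable state guarantees that an accepting location is reached within finitely many steps along at least one continuation, but I must argue that one can select successors consistently so that the ``distance to the next accepting location'' is uniformly finite along the chosen run and decreases appropriately; this is the reachability-under-a-strategy argument the introduction flags as the key difference from~\cite{ChatterjeeGGKZ24}. I would handle this by fixing, for each reachable state, a single outgoing edge lying on some fixed accepting path through that state, thereby inducing a deterministic sub-system in which accepting locations are guaranteed to recur, and defining $\Certificate$ as the finite distance to the next accepting location within this sub-system; finiteness follows since each reachable state lies on an accepting path and accepting locations recur infinitely often along it.
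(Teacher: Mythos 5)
Your soundness direction is essentially the paper's argument (greedy successor selection via the B\"uchi ranking condition, then the impossibility of an infinite strict decrease of a non-negative value forces recurrence of accepting locations), and it is fine. The gap is in the completeness direction. You take $I$ to be \emph{all} reachable states of the PDTS and justify the well-definedness of $\Certificate$ by the claim that ``an accepting path exists from every reachable state.'' That claim is false in general: the NBA is \emph{non-deterministic}, so a reachable PDTS state may correspond to a wrong non-deterministic guess of the automaton and admit no accepting continuation at all (e.g.\ a branch into a rejecting sink of the NBA is still reachable in $\Tranx$). At such a state your $\Certificate$ is either undefined (there is no accepting run through it along which to measure a distance) or, if patched to be non-negative as you stipulate (``non-negativity is immediate''), it makes the premise of the strict-decrease clause true while no successor can sustain an infinite non-negative strictly decreasing sequence --- so the B\"uchi ranking condition fails. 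Satisfaction of $\varphi$ only guarantees an accepting run from the \emph{initial} states, not from every reachable state; this is exactly the point where nondeterminism of the automaton bites.

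The repair is the one the paper uses: do not take $I$ to be all reachable states. Instead, fix a successor-selection operator (the paper's $\textsc{Next}$) that, at each state lying on some accepting infinite path, picks one successor that stays on an accepting path, and define $I$ as the closure of $\{\NBAinit\}\times\Init$ under this operator. Then every state of $I$ lies on a chosen accepting run by construction, $d_{\text{accept}}$ is well defined and finite on $I$, and one sets $\Certificate=-1$ outside $I$ so that the ranking conditions are vacuous there. Your ``deterministic sub-system'' idea is morally this construction, but it only works if $I$ is restricted to the states actually generated by the fixed choices from the initial states, not to the full reachable set of $\Tranx$.
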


\begin{proof} {\em Soundness.} Suppose that there exists a distributional certificate $(\Certificate, I)$ for $\MDP$, $\Init$, $\pi$ and $\varphi$. To show that $\varphi$ is satisfied, by Proposition~\ref{prop:product} it suffices to show that the PDTS $\Tranx$ admits an accepting infinite path for every initial state in $\{q_0\} \times \Init$. 
	
Fix an initial state $(q_0,\mu_0) \in \{q_0\} \times \Init$. By the Initial condition in Definition~\ref{def:certificate}, we know that $\Certificate(q_0,\mu_0) \geq 0$ and $(q_0,\mu_0) \in I$. Hence, by the B\"uchi ranking condition in Definition~\ref{def:certificate}, we can repeadly select successor states in order to obtain an infinite path $(q_0,\mu_0), (q_1,\mu_1),\dots$ in $\Tranx$ such that, for each $i \in \mathbb{N}_0$, we have
\begin{itemize}
	\item $\Certificate(q_i,\mu_i) \geq 0$ and $(q_i,\mu_i) \in I$, and
	\item whenever $q_i \not\in \NBAaccept$ is not an accepting state in $\NBA^\varphi$, we have $\Certificate(q_i,\mu_i) - 1 \geq \Certificate(q_{i+1},\mu_{i+1})$.
\end{itemize}
We claim that $(q_0,\mu_0), (q_1,\mu_1),\dots$ is an accepting infinite path in $\Tranx$. To prove this, note that for every $(q_i,\mu_i)$ with $q_i \not\in \NBAaccept$, the value of $\mathcal{C}$ needs to keep decreasing by at least $1$ in each subsequent step while also remaining non-negative. Hence, in at most $\lceil \Certificate(q_i,\mu_i) \rceil$ steps, the path must again reach an accepting state. Thus, the infinite path $(q_0,\mu_0), (q_1,\mu_1),\dots$ reaches accepting states in $\NBAaccept \times \Delta(S)$ infinitely many times and is an accepting infinite path. Since the initial state $(q_0,\mu_0) \in \{q_0\} \times \Init$ was arbitrary, this concludes the proof.
		
\smallskip\noindent{\em Completeness.}	Conversely, suppose that $\MDP$ with a set of initial distributions $\Init$ under distributionally memoryless strategy $\pi$ satisfies distributional $\omega$-regular specification $\varphi$. We construct an instance $(\Certificate, I)$ of a distributional certificate for $\MDP$, $\Init$, $\pi$ and $\varphi$ as follows.

Consider an arbitrary but throughout fixed enumeration $q_1,\dots,q_{|Q|}$ of NBA states. We define an operator $\textsc{Next}: \NBAStates \times \Delta(S) \rightarrow \NBAStates \times \Delta(S)$ via
\begin{itemize}
    \item $\textsc{Next}(q,\mu) = (q_i,\MDP^\pi(\mu))$ with $i$ being the smallest index such that $(q,\mu),(q_i,\MDP^\pi(\mu))$ are successor states along some accepting infinite path in the PDTS, if such an accepting infinite path exists, or
    \item $\textsc{Next}(q,\mu) = (q,\mu)$, otherwise.
\end{itemize}
In other words, $\textsc{Next}(q,\mu)$ fixes a successor state of $(q,\mu)$ along some accepting infinite path in the PDTS if such a path exists, or halts the sequence at the state $(q,\mu)$ otherwise. Therefore, the transitive closure of the operator $\textsc{Next}(q,\mu)$ from the set of initial PDTS states $\{q_0\} \times \Init$ allows us to consistently fix a {\em unique accepting infinite path} for each state $(q,\mu)$ that is contained along some accepting infinite path. 

We can now define our distributional certificate $(\Certificate,I)$. Let distributional invariant $I \subseteq \NBAStates \times \Delta(S)$ be the set of all states in the PDTS that are reachable from $\{q_0\} \times \Init$ under the transitive closure of the $\textsc{Next}$ operator. Moreover, for each PDTS state $(q,\mu) \in I$, let $d_{\text{accept}}(q,\mu)$ denote the number of steps when applying the $\textsc{Next}$ operator until an accepting state in $\NBAaccept \times \Delta(S)$ is reached, with $d_{\text{accept}}(q,\mu) = 0$ if $(q,\mu) \in \NBAaccept \times \Delta(S)$ is an accepting state. Finally, let the distributional B\"uchi ranking function $\Certificate$ be defined via
\[ \Certificate(q,\mu) = \begin{cases}
    d_{\text{accept}}(q,\mu), &\text{if } (q,\mu) \in I, \\
    -1, &\text{otherwise}.
\end{cases} \]
We claim that $(\Certificate, I)$ is an instance of a distributional certificate. The Initial condition in Definition~\ref{def:certificate} is satisfied since the MDP $\MDP$ under strategy $\mu$ satisfies $\varphi$, therefore every initial state $(\NBAinit,\mu) \in
\{\NBAinit\} \times \Init$ belongs to some accepting infinite path in the PDTS and so $(\NBAinit,\mu) \in I$ and $d_{\text{accept}}(\NBAinit,\mu) \geq 0$. On the other hand, by the definition of the $\textsc{Next}$ operator and $I$, for each $(q,\mu) \in I$ we have that also $\textsc{Next}(q,\mu) \in I$. Moreover, $\textsc{Next}(q,\mu) = d_{\text{accept}}(q,\mu) - 1 \geq 0$ if $q \not\in \NBAaccept$ and $\textsc{Next}(q,\mu) \geq 0$ id $q \in \NBAaccept$. Hence, the B\"uchi ranking condition in Definition~\ref{def:certificate} is also satisfied, and $(\Certificate,I)$ is a distributional certificate.
\end{proof}

\begin{example}
%Based on the definition provided in the paper, the {\em distributional B\"uchi ranking function} and distributional invariant $\mathcal{C}: \Delta(S) \times \NBAStates \rightarrow \mathbb{R}$ , $I \subseteq \Delta(S) \times \NBAStates$

Consider again the MDP in Fig.~\ref{fig:running}. As in Example~\ref{fig:running}, consider the strategy $\pi$ which in state $A$ takes action $b$ with probability $1$, and the distributional specification $\varphi = \mathbf{G}\, \mathbf{F}\, (p(B) \geq 0.249)$. An NBA for $\varphi$ and the resulting PDTS are shown in Fig.~\ref{fig:pdts}. The following is an example of a distributional certificate $(\mathcal{C},I)$ for $\MDP$, $\Init$, $\pi$ and~$\varphi$:
\begin{equation*}
    \Certificate(q,\mu) = \begin{cases}
        1 + 250 \cdot \mu(A) + 750 \cdot \mu(C), &\text{if } q = q_0, \\
        1.25 - 1.25 \cdot \mu(C), &\text{if } q = q_1,
    \end{cases}
\end{equation*}
and $I = \{(q_0,\mu) \mid 1.25 + \mu(A) + 1 \cdot \mu(B) - \mu(C) \geq 0\} \cup \{(q_1,\mu) \mid \mu(A) + 0.25 \cdot \mu(B)\}$. 

One can verify by inspection that the Initial condition and the B\"uchi ranking condition in Definition~\ref{def:certificate} are satisfied. We note that the above distributional certificate $(\mathcal{C},I)$ is the certificate computed by our prototype implementation in Section~\ref{sec:experiments}.
%Given that $V_0 + V_1 + V_2 = 1$, it is easy to verify that the transition from $q_0$ to $q_1$ satisfies the strict decrease condition required for the construction of a certificate. Additionally, it is evident that the values of both $C$ and $I$ remain strictly positive for all states in both locations $q_0$ and $q_1$.

%are defined over the distributional product state space. In our running example (Example 1), we have $S = \{A,B,c\}$ and $Q={q0,q1}$. Assume that for a distribution $\mu$ the probabilities assigned to states $A$, $B$, and $C$ are $V_0$, $V_1$, and $V_2$, respectively. These probabilities satisfy the constraint:
%\[
%V_0 + V_1 + V_2 = 1.
%\]
%We define the functions $C$ and $I$ for each pair $(\mu, q_i)$ as follows:
%\[
%C(\mu, q_0) = 1 + 250 \cdot V_0 + 0 \cdot V_1 + 750 \cdot V_2,
%\]
%\[
%C(\mu, q_1) = \frac{5}{4} + 0 \cdot V_0 + 0 \cdot V_1 + \left(-\frac{5}{4}\right) \cdot V_2,
%\]
%\[
%I(\mu, q_0) = \frac{5}{4} + 1 \cdot V_0 + 2 \cdot V_1 + %(-1) \cdot V_2,
%\]
%\[
%I(\mu, q_1) = 0 + 1 \cdot V_0 + \frac{1}{4} \cdot V_1 + 0 \cdot V_2.
%\]

%\noindent Given that $V_0 + V_1 + V_2 = 1$, it is easy to verify that the transition from $q_0$ to $q_1$ satisfies the strict decrease condition required for the construction of a certificate. Additionally, it is evident that the values of both $C$ and $I$ remain strictly positive for all states in both locations $q_0$ and $q_1$.

\end{example}

\begin{remark}[Distributional certificates for the existential problem]
    As discussed in Section~\ref{sec:prelims} and Remark~\ref{rmk:univexist}, our distributional certificate in Definition~\ref{sec:certdefinition} and our soundness and completeness result in Theorem~\ref{thm:certificate} consider the universal satisfaction setting. However, their extension to the existential setting is immediate. The only required change in the definition of distributional certificates is to require the Initial condition in Definition~\ref{sec:certdefinition} to hold {\em for some} initial distribution $\mu_0 \in \Init$. On the other hand, the soundness and completeness proof proceeds analogously as in the proof of Theorem~\ref{thm:certificate}, with the difference in the completeness proof being that the distributional invariant $I$ is defined as the transitive closure of the $\textsc{Next}$ operator with the singleton initial set $\{(\NBAinit,\mu_0)\}$, rather than the initial set $\{\NBAinit\} \times \Init$.
\end{remark}

\section{Template-based Strategy Verification and Synthesis with Certificates}\label{sec:algo}

%While Theorem~\ref{thm:certificate} establishes soundness and completeness of our distributional certificates, in combination with Remark~\ref{rmk:hardness} it also implies that giving a sound and complete algorithm for checking the existence of and synthesizing distributional certificates is Skolem-hard. Hence, in this section, we instead focus on designing sound and relatively complete algorithms for distributional verification and control problems towards achieving more computationally tractable complexity. To achieve this, we follow a template-based synthesis approach which searches for the distributional certificate of some given parametric form and uses SMT-solving to automate its synthesis. Specifically, our verification and control algorithms synthesize {\em affine} distributional certificates for proving that a desired distributional $\omega$-regular specification is satisfied.

We now present our algorithms for the strategy verification and synthesis problems for distributional $\omega$-regular specifications. The core of the verification algorithm is to synthesize an {\em affine} distributional certificate in the PDTS of the input MDP and the specification, which proves that the specification is satisfied. When we move from strategy verification to strategy synthesis, we also synthesize an {\em affine} distributionally memoryless strategy. 

The restriction to affine distributional certificates and affine distributionally memoryless strategies is needed to ensure tractability. While Theorem~\ref{thm:certificate} establishes soundness and completeness of our distributional certificates, in combination with Remark~\ref{rmk:hardness} it also implies that giving a sound and complete algorithm for synthesizing distributional certificates is Skolem-hard. Hence, in this section, we instead focus on designing sound and {\em relatively complete} algorithms for synthesizing an affine distributional certificate together with an affine distributionally memoryless strategy (the latter for the synthesis problem), when they exist. 

%Thus, while we cannot provide completeness guarantees due to this restriction (the hardness results shown earlier also say that this would anyway be impossible), we show that the algorithm is sound and {\em relatively complete}, i.e.\ it is guaranteed to compute a (distributionally) memoryless policy and an affine distributional certificate when they exist. 

\smallskip\noindent{\bf Affine distributional certificates and strategies.} We first formalize the notions of affine distributional certificates and distributionally memoryless strategies:
\begin{itemize}
    \item {\bf Affine distributional certificates.} A distributional certificate $(\mathcal{C},I)$ is said to be {\em affine} if both the distributional B\"uchi ranking function $\mathcal{C}$ and the distributional invariant $I$ can be expressed in terms of affine expressions and inequalities over the space $\Delta(S)$ of probability distributions over MDP states. We require $\mathcal{C}$ to be of the form
    \begin{equation}\label{eq:rankingtemplate}
        \mathcal{C}(q,\mu) = \sum_{i=1}^{|S|} a^q_i \cdot \mu(s_i) + b^q,
    \end{equation}
    where $a^q_1,\dots,a^q_{|S|},b^q$ are some real valued coefficients for each NBA state $q \in \NBAStates$. That is, for each NBA state $q$, the function $\mathcal{C}(q,\cdot)$ is an affine function over the probabilities of being in each MDP state, with $\mu(s_1),\dots,\mu(s_{|V|})$ being the variables that capture probabilities of being in each MDP state and $a^q_1,\dots,a^q_{|S|},b^q$ being the coefficients of the affine function. Similarly, we require $I$ to be a set defined by a conjunction of $N_I$ affine inequalities over the probabilities of being in each MDP state, i.e.~to be of the form
    \begin{equation}\label{eq:invarianttemplate}
        I = \Big\{ (\mu,q) \in \Delta(S) \times \NBAStates \mid \land_{k=1}^{N_I} I^k(q,\mu) \geq 0\Big\},
    \end{equation}
    where each $I^k(q,\mu) = \sum_{i=1}^{|S|} c^{k,q}_i \cdot \mu(s_i) + d^{k,q} \geq 0$ and $c^{k,q}_1,\dots,c^{k,q}_{|S|},d^{k,q}$ are some real valued coefficients for each NBA state $q \in \NBAStates$ and each $k \in \{1,\dots,N_I\}$. The number $N_I$ is referred to as the {\em size of the invariant} and will be an algorithm parameter.

    \item {\bf Affine distributionally memoryless strategies.} A distributionally memoryless strategy $\pi: \Delta(S) \rightarrow \Delta(Act)$ is said to be {\em affine}, if for each state $s \in \States$, action $a \in \Actions$ and state distribution $\mu \in \Delta(S)$, the probability of taking action $a$ in state $s$ given the current distribution over states $\mu$ is of the form
    \begin{equation}\label{eq:strategytemplate}
    \pi(s,a)(\mu) = \frac{\sum_{i=1}^{|S|} e^\pi_{i,s,a} \cdot \mu(s_i) + f^\pi_{s,a}}{\sum_{i=1}^{|S|} g^\pi_{i,s} \cdot \mu(s_i) + h^\pi_{s}},
    \end{equation}
    where $e^\pi_{1,s,a},\dots,e^\pi_{|S|,s,a},f^\pi_{s,a}$ and $g^\pi_{1,s},\dots,g^\pi_{|S|,s},h^\pi_{s}$ are real valued constants. The denominator is used in order to normalize the probabilities such that the sum of probabilities of all actions being taken at a state $s$ is $1$.
\end{itemize}

\smallskip\noindent{\bf Algorithm input.} Both our verification and synthesis algorithms take as input an MDP $\MDP = (\States, \Actions, P)$, a set of initial distributions $\Init$, and a distributional $\omega$-regular specification $\varphi$ defined over atomic propositions $\AP$. We assume that the distributional $\omega$-regular specification is provided via an NBA $\NBA^\varphi = (\NBAStates, \NBALetters, \NBAtransition, \NBAinit, \NBAaccept)$ with letters $\NBALetters = 2^\AP$, which accepts the same set of infinite words over $2^\AP$ as $\varphi$. %In practice and later in our experiments, if the specification is given as an LTL formula, we use state-of-the-art LTL-NBA translators, such as SPOT~\cite{duret2022spot}, to first obtain the NBA. 
Finally, the algorithms also take as input the size of the invariant $N_I$ that needs to be synthesized. The verification algorithm in addition takes as input an affine distributionally memoryless strategy $\pi$.

\smallskip\noindent{\bf Algorithm overview.} Both verification and synthesis algorithms follow a template-based synthesis approach and proceed in four steps. In the first step, the PDTS of the input MDP and the distributional specification is constructed. In the second step, the algorithms fix a symbolic template for the affine distributional certificate, i.e.~symbolic variables for each real valued coefficient in affine expressions that define the certificate. The synthesis algorithm also fixes a symbolic template for the affine distributionally memoryless strategy. In the third step, the algorithms collect a system of constraints over the symbolic template variables, that together encode all defining conditions of distributional certificates in Definition~\ref{def:certificate} as well as conditions for the strategy template to define a valid distributionally memoryless strategy (the latter for the synthesis algorithm). Finally, in the fourth step, the collected system of constraints is solved by using an SMT-solver, to get a concrete valuation of the symbolic template variables which in turn gives rise to a distributional certificate and a distributionally memoryless strategy. %If the solver returns no, then we are guaranteed that for this variable template size, there exists no 
In what follows, we detail each of these four steps.

\smallskip\noindent{\bf Step 1: Constructing the PDTS.} In this step, the PDTS $\Tranx = (\Locs^\times,\Vars^\times,\locinit^\times,\vecinit^\times,\transitions^\times)$ is constructed from the given MDP $\MDP$ and the NBA $N^\varphi$, as explained in Section~\ref{sec:product}. 

\smallskip\noindent{\bf Step 2: Fixing templates.}
    The algorithms fix a template for the affine distributional certificate $(\mathcal{C},I)$, while the synthesis algorithm also fixes a template for the affine distributionally memoryless strategy $\pi$. The novelty, compared to prior work on verification and synthesis for distributional reachability and safety specifications~\cite{AkshayCMZ23,AkshayCMZ24}, lies in a more complex template design for the distributional certificate, which is now defined with respect to the PDTS:
\begin{itemize}
	%\item {\em Template for $\mu_0$.} For each MDP state $s_i$, $1\leq i\leq n$, the algorithm introduces a symbolic template variable $m_i$ to encode the probability of initially being in $s_i$.
	\item {\bf Template for $\Certificate$.} Recall that an affine distributional B\"uchi ranking function is of the form $\mathcal{C}(q,\mu) = \sum_{i=1}^{|S|} a^q_i \cdot \mu(s_i) + b^q$ as in eq.~\eqref{eq:rankingtemplate}. Hence, the template for $\mathcal{C}$ is defined by introducing a set of symbolic template variables $a^q_1,\dots,a^q_{|S|},b^q$ for each NBA state $q \in \NBAStates$.
	\item {\bf Template for $I$.} Similarly, the template for an affine distributional invariant $I$ is of the form as in eq.~\eqref{eq:invarianttemplate}. Hence, the template for $I$ is defined by introducing a set of symbolic template variables $c^{k,q}_1,\dots,c^{k,q}_{|S|},d^{k,q}$ for each NBA state $q \in \NBAStates$ and each $k \in \{1,\dots,N_I\}$, where $N_I$ is the algorithm parameter that specifies the size of the invariant.
    %The template for $I$ is defined by introducing real-valued symbolic template variables $b^j_i$ for each $1 \leq j \leq N_I$ and $0 \leq i \leq n$, with
	%\[ I = \Big\{\mu \in \mu(\States) \mid {\bigwedge}_{j=1}^{N_I} (b_0^j + {\sum}_{i=1}^n b_i^j \cdot \mu(s_i)) \geq 0 \Big\}. \]
    \item {\bf Template for $\pi$ (synthesis algorithm).} The template for an affine distributionally memoryless strategy $\pi$ is of the form as in eq.~\eqref{eq:strategytemplate}, hence it is defined by introducing symbolic template variables $e^\pi_{1,s,a},\dots,e^\pi_{|S|,s,a},f^\pi_{s,a}$ and $g^\pi_{1,s},\dots,g^\pi_{|S|,s},h^\pi_{s}$ for each state $s \in \States$ and action $a \in \Actions$.
    Note that, in the special case when we are interested in synthesizing memoryless strategies instead of distributionally memoryless strategies, the strategy template becomes simpler. Instead of the template as in eq.~(\ref{eq:strategytemplate}), we introduce a single symbolic template variable $p_{s,a}^\pi$ for each state-action pair $s \in \States$ and $a \in \Actions$, to encode the probability of taking action $a$ in state $s$.
    %Our goal is to synthesize a (distributionally) memoryless policy, hence, for each state action pair $\state_i \in \States$ and $\action_j \in \Actions$, we fix a symbolic template variable $p_{\state_i,\action_j}$ to encode the probability of taking action $\action_j$ in state $s_i$. Indeed, if $\action_j \not\in \Actions(s_i)$, we set $p_{\state_i,\action_j}=0$. 
\end{itemize}

\smallskip\noindent{\bf Step 3: Collecting constraints.} In this step, the algorithms collect a system of constraints over the symbolic template variables that together encode that $\mathcal{C}$ and $\mathcal{I}$ indeed define a valid distributional certificate. For the synthesis algorithm, we also collect a system of constraints that encode that $\pi$ defines a valid distributionally memoryless strategy. In each of the following constraints, each appearance of $\mathcal{C}$, $\mathcal{I}$ and $\pi$ is replaced by the symbolic template introduced in Step~2, in the form as in eq.~\eqref{eq:rankingtemplate}, \eqref{eq:invarianttemplate} and \eqref{eq:strategytemplate}. Moreover, we write $\mu \in \Delta(S)$ for the conjunction of affine inequalities $\land_{i=1}^{|S|} (\mu_i \geq 0) \land (\mu_1+\dots+\mu_{|S|}=1)$.
\begin{itemize}
    \item {\bf Initial condition.}  We define 
    \[ \Phi_{\text{init}} \equiv \forall \mu \in \mathbb{R}^{|S|}.\, \mu \in \Delta(S) \,\land\, \mu \in \Init \Longrightarrow \mathcal{C}(\NBAinit,\mu) \geq 0 \,\land\, \bigwedge_{k=1}^{N_I}I^k(\NBAinit,\mu) \geq 0. \]
    \item {\bf B\"uchi ranking condition for accepting states.} For each accepting state $q \in \NBAaccept$ and letter $\sigma \in 2^\AP$ in the NBA, we define 
    \begin{equation*} 
    \label{eq:buchiacc}
    \begin{split}
        \Phi_{\text{B\"uchi},q,\sigma} \equiv \forall \mu \in \mathbb{R}^{|S|}.\, &\bigvee_{q'\in\delta(q,\sigma)} \mu \in \Delta(S) \,\land\, \mu \models \Guard(\sigma) \,\land\, \Certificate(q,\mu) \geq 0 \,\land\, \bigwedge_{k=1}^{N_I} I^k(q,\mu) \geq 0 \\
        &\Longrightarrow \Certificate(q',\MDP^\pi(\mu)) \geq 0 \,\land\, \bigwedge_{k=1}^{N_I} I^k(\MDP^\pi(q',\mu)) \geq 0.
    \end{split}
    \end{equation*}
    \item {\bf B\"uchi ranking condition for nonaccepting states.} For each non-accepting state $q \in \NBAStates\backslash\NBAaccept$ and letter $\sigma \in 2^\AP$ in the NBA in the NBA, we define 
    
    \begin{equation*}
    \label{buchinonacc}
    \begin{split}
    	\Phi_{\text{B\"uchi},q,\sigma} \equiv \forall \mu \in \mathbb{R}^{|S|}.\, &\bigvee_{q'\in\delta(q,\sigma)} \mu \in \Delta(S) \,\land\, \mu \models \Guard(\sigma) \,\land\, \Certificate(q,\mu) \geq 0 \,\land\, \bigwedge_{k=1}^{N_I} I^k(q,\mu) \geq 0 \\
    	&\Longrightarrow \Certificate(\mu,q) - 1 \geq \Certificate(\MDP^\pi(\mu),q') \geq 0 \,\land\, \bigwedge_{k=1}^{N_I} I^k(\MDP^\pi(q',\mu)) \geq 0.
    \end{split}
    \end{equation*}
        \item {\bf Strategy conditions (synthesis algorithm).} For the strategy template to indeed define a valid affine distributionally memoryless strategy, we require that:
        \begin{equation*}
        	 \label{strategy}
        	\Phi_{\pi} \equiv \bigwedge_{s\in S} \Big( \sum_{a\in Act} \pi(s,a)(\mu)=1 \land \bigwedge_{a\in Act} (\pi(s,a)(\mu) \geq 0) \Big).
        \end{equation*}
\end{itemize}
In the above definitions, note that $\MDP^\pi(\mu)$ is the one-step successor from distribution $\mu$ when policy $\pi$ is applied in the MDP, computed as: $\sum_{\state \in S, a \in Act(s)} \pi(s,a)(\mu) \cdot P(s, a)$.

%\smallskip\noindent{\bf Algorithmic improvement: Exploiting partial determinism.} One of the main bottlenecks when using the above set of constraints is that they can often be rather large and complex to solve, particularly due to the disjunction in $\Phi_{\text{B\"uchi},q,\sigma}$ constraints. However, in deterministic NBA states $q \in \NBAStates$ where $|\delta(q,\sigma)| = 1$ for all letters $\sigma \in \Sigma$, we know that each letter corresponds to exactly one transition. Using this observation allows us to rewrite the B\"uchi ranking condition constraints in a split form which allows for more efficient constraint solving by removing the outermost disjunction. This observation leads to significant practical improvement, as demonstrated by our experimental results.

\smallskip\noindent{\bf Step 4: Constraint solving.} The strategy condition constraint $\Phi_\pi$ is a purely existentially quantified Boolean combination of affine inequalities over the symbolic template variables. However, constraints $\Phi_{\text{init}}$ and $\Phi_{\text{B\"uchi},q,\sigma}$ are all of the form
\[ \forall \mu \in \mathbb{R}^{|S|}.\, \bigvee_{i=1}^m\bigwedge_{j=1}^n \text{aff-expr}_{i,j}(t,\mu) \geq 0 \geq \bigwedge_{l=1}^k \text{poly-expr}_k(t,\mu) \geq 0, \] 
where $t$ is the vector of all symbolic template variables, $\text{aff-expr}_{i,j}(t,\mu)$'s are some affine functions and $\text{poly-expr}_k$'s are some polynomial functions over the vectors of variables $t$ and $\mu$. Polynomial expressions on the right hand side arise due to the quotients of affine expressions that define affine distributionally memoryless strategies, see eq.~\eqref{eq:strategytemplate}. Hence, multiplying both sides of the inequality by the affine expressions appearing in denominators results in polynomial expressions over variables in $t$ and $\mu$.

The problem of synthesizing affine distributional certificates and affine distributionally memoryless strategies then reduces to solving a system of constraints that contain quantifier alternation $\exists t.\, \forall \mu$. Such quantifier alternation over real-valued variables is generally hard to handle directly and can lead to inefficiency in solvers. In order to allow for a more efficient constraint solving, before passing our system of constraints to an SMT-solver, we first apply Handelman's theorem~\cite{handelman1988representing} to translate $\Phi_{\text{init}}$ and $\Phi_{\text{B\"uchi},q,\sigma}$ into a purely existentially quantified system of polynomial constraints over the symbolic template variables in $t$ and auxiliary variables introduced by the translation, whose satisfiability implies satisfiability of the original constraints. This translation is common in template-based program analysis, see~\cite{AsadiC0GM21} for details. This step allows for more efficient constraint solving as well as better bound on the algorithm complexity. Finally, the resulting purely existentially quantified system of polynomial constraints over real-valued variables is solved via an SMT solver.

In the special case when we are interested in synthesizing memoryless strategies rather than distributionally memoryless strategies, we may use Farkas' lemma~\cite{farkas1902theorie} rather than Handelman's theorem. This yields a {\em sound and complete} translation into an equisatisfiable purely existentially satisfied system of constraints.

\smallskip\noindent{\bf Soundness, relative completeness, complexity.} Soundness of our algorithms follows from the soundness of all four steps above, including soundness of the transformations via Handelman's theorem and Farkas' lemma~\cite{AsadiC0GM21}. Since the Farkas' lemma transformation leads to an equisatisfiable system of constraints, it also follows that our algorithm is {\em relatively complete} -- it is guaranteed to synthesize an affine distributional certificate and memoryless strategy whenever they exist. Finally, our algorithms provide a PSPACE complexity upper bound as they reduce the synthesis and verification problems to solving a sentence in the existential first-order theory of the reals. The following theorem summarizes these results.

\begin{theorem}\label{thm:finalalgo}
	\emph{Soundness}: If the algorithm returns an affine distributional certificate $(\Certificate,I)$ and an affine distributionally memoryless strategy $\pi$ (for the synthesis algorithm), then the MDP $\MDP$ with initial distributions $\Init$ under strategy $\pi$ satisfies specification $\varphi$.
    
    %initial distribution $\mu_0$, memoryless policy $\pi$, the affine distributional certificate $\Certificate$ and the inductive invariant $I$, then the MDP $\MDP$ satisfies the $\omega$-regular specification $\varphi$ under $\pi$.
	
	\emph{Relative completeness}: If there exist an affine distributional certificate $(\Certificate,I)$ and a memoryless strategy $\pi$, then there exists an invariant size $N_I \in \mathbb{N}$ such that $(\Certificate,I)$ and $\pi$ are computed by the algorithm.
	
	\emph{Complexity}: The runtime of the algorithm is in PSPACE in the size of the encoding of the MDP, NBA $N^\varphi$, startegy $\pi$ (for the verification algorithm) and invariant size $N_I \in \mathbb{N}$. %, and lies in NP for Markov chains.
\end{theorem}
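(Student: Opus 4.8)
The plan is to prove the three claims by chaining the already-established correctness of distributional certificates (Theorem~\ref{thm:certificate}) with the correctness of the quantifier-elimination steps in Step~4, and finally with known complexity bounds for the existential theory of the reals. For \emph{soundness}, suppose the solver returns a concrete valuation of the template variables. First, for the synthesis variant the strategy constraint $\Phi_{\pi}$ guarantees that the returned $\pi$ assigns a genuine probability distribution over actions at every state distribution, hence is a valid distributionally memoryless strategy (for verification $\pi$ is given and this step is vacuous). Next, by soundness of Handelman's theorem~\cite{handelman1988representing}, satisfiability of the purely existential polynomial system produced in Step~4 implies satisfiability of the original universally quantified constraints $\Phi_{\text{init}}$ and $\Phi_{\text{B\"uchi},q,\sigma}$. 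By their construction in Step~3, these constraints are a faithful encoding of the Initial condition and the B\"uchi ranking condition of Definition~\ref{def:certificate}, so the recovered pair $(\Certificate,I)$ is a genuine affine distributional certificate for $\MDP$, $\Init$, $\pi$ and $\varphi$. Invoking the soundness direction of Theorem~\ref{thm:certificate} then yields that $\MDP$ under $\pi$ from $\Init$ satisfies $\varphi$.

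For \emph{relative completeness}, I would exploit that we are now promised a memoryless (rather than merely distributionally memoryless) strategy $\pi$. In this regime the one-step operator $\MDP^\pi(\mu)=\sum_{s,a}\pi(s,a)\cdot P(s,a)$ is affine in $\mu$, so after substitution the constraints $\Phi_{\text{init}}$ and $\Phi_{\text{B\"uchi},q,\sigma}$ become universally quantified implications between affine predicates. I would first set the invariant size $N_I$ equal to the number of affine inequalities defining the given invariant $I$ (padding with trivial inequalities such as $1\geq 0$ if needed), so that the template in eq.~\eqref{eq:invarianttemplate} can express $I$ and the hypothesized certificate lies in the template space. Since the given $(\Certificate,I)$ and $\pi$ satisfy Definition~\ref{def:certificate}, the substituted affine constraints are satisfiable; by the equisatisfiability (both soundness \emph{and} completeness) of the Farkas' lemma~\cite{farkas1902theorie} encoding in the affine case, the existential system passed to the SMT solver is likewise satisfiable. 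As the existential theory of the reals is decidable and the solver is complete for it, the algorithm returns an affine certificate and strategy, establishing relative completeness.

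For \emph{complexity}, I would argue that each of the four steps stays within PSPACE and produces an object of size polynomial in the encodings of $\MDP$, $\NBA^\varphi$, $\pi$ (for verification) and the parameter $N_I$: the PDTS has $|Q|$ locations and one transition per listed NBA edge, the templates introduce polynomially many symbolic variables, and the number of collected constraints is at most $|Q|$ times the number of letters appearing in the NBA, each constraint of polynomial size. At a fixed product degree, the Handelman/Farkas translation yields a polynomially sized, purely existentially quantified polynomial system over the reals. Deciding satisfiability of such a sentence in the existential first-order theory of the reals is in PSPACE, and the construction and translation steps themselves are within PSPACE, so the overall algorithm runs in PSPACE.

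I expect the relative completeness argument to be the main obstacle. The delicate point is the disjunctive consequent $\bigvee_{q'\in\delta(q,\sigma)}$ arising from non-determinism of the NBA: a universal implication whose consequent is a disjunction of affine predicates is not, in general, directly amenable to an exact Farkas encoding, so one must argue that the encoding used preserves equisatisfiability in the affine case, for instance by case-splitting on the chosen successor in the spirit of the $\textsc{Next}$ operator from the completeness proof of Theorem~\ref{thm:certificate}. A second subtlety worth making explicit is precisely why completeness is claimed only for memoryless strategies: for genuinely distributionally memoryless (quotient-of-affine) strategies the substituted constraints become polynomial rather than affine, so only the sound-but-incomplete Handelman transformation applies, whereas Farkas supplies the exactness that completeness requires.
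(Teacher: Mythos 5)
Your proposal is correct and follows essentially the same route as the paper, whose own justification is only a short paragraph: soundness by chaining the faithfulness of the Step~3 encoding, the soundness of the Handelman/Farkas translation, and Theorem~\ref{thm:certificate}; relative completeness from the equisatisfiability of the Farkas encoding in the affine (memoryless-strategy) case; and PSPACE from the reduction to the existential theory of the reals. In fact you go further than the paper by explicitly flagging the disjunctive consequent $\bigvee_{q'\in\delta(q,\sigma)}$ as the point where a naive Farkas encoding is not exact and a per-successor case split is needed for the completeness claim --- a subtlety the paper's sketch does not address.
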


\section{Experimental Evaluation}\label{sec:experiments}

We implemented a prototype of our method in Python~3 and experimentally evaluated it on a number of challenging verification and synthesis tasks collected from the literature on distributional specifications in MDPs. Our prototype takes as input an MDP (in the Prism~\cite{KwiatkowskaNP11} input format) and an LTL specification. The LTL specification is then translated into an NBA via Spot~\cite{duret2022spot}. For the constraint solving step in our algorithms, we use PolyQEnt~\cite{chatterjee2025polyqent} which provides a tooling support for quantifier elimination via Farkas' lemma and Handelman's theorem. PolyQEnt uses Z3~\cite{MouraB08} and MathSAT5~\cite{mathsat5} as backend SMT solvers for the final system of purely existentially quantified constraints. We set the invariant size parameter to $N_I = 1$, which was sufficient for all our experiments. Our experiments were conducted on consumer-grade hardware (AMD Ryzen 5 5625U CPU, 8GB RAM). %We will make our code and benchmarks publicly available upon paper acceptance.

%Using the libraries we implemented for polynomial manipulation, we developed a complete pipeline of our method in Python 3. This pipeline reads files containing MDPs and LTL specifications, generates constraints, and interfaces with PolyQent, an SMT-solving tool. PolyQent supports multiple solvers, and in this project, we evaluated different configurations for the Z3 and MathSAT solvers, ultimately identifying the most efficient and suitable configuration. Our experiments were conducted on consumer-grade hardware (AMD Ryzen 5 5625U CPU, 8GB RAM).

\smallskip\noindent {\bf Benchmarks.} We evaluated our method on several examples collected from the literature:
\begin{itemize}
    \item {\bf GridWorld (synthesis).} Motivated by~\cite{AkshayCMZ24}, these benchmarks model robot swarms in gridworld environments. Initially, all robots are placed in the top-left corner of the gridworld environment. Some of the cells are covered by walls whereas some are slippery and with certain probability may lead to moving in an undesired direction. Hence, each environment induces an MDP. As shown in~\cite{AkshayCMZ24}, the evolution of a robot swarm can be analyzed by taking the distribution transformer view of MDPs and considering how the robots are distributed across the gridworld cells at each time step. In Table~\ref{tab:example}, we consider $5$ gridworld benchmarks of varying sizes and consider two distributional specifications: (1)~at least 90\% of robots should be in some slippery target cell infinitely often, and (2) in addition, at most 50\% of robots should occupy some narrow passage at any point in time.
    \item {\bf PageRank (verification).} We consider a Markov chain representation of the PageRank algorithm taken from~\cite{DBLP:journals/jacm/AgrawalAGT15}. Given the context, we consider various verification tasks, which are of the form: always if the probability mass at some vertex/page is above a threshold, then eventually, it must be above a threhold in another page.
    \item {\bf Pharamakocinetics (verification)} We also consider a 6 state Markov chain from ~\cite{DBLP:journals/jacm/AgrawalAGT15} which is adapted from a Pharmacokinetics example in~\cite{DBLP:conf/qest/ChadhaKVAK11}. We use the two queries that were listed in~\cite{DBLP:journals/jacm/AgrawalAGT15} as the motivating examples to obtain our specifications. 
    \item {\bf Benchmarks from~\cite{AkshayCMZ23} (verification and synthesis)}. Finally, we collect $3$ pairs of verification and synthesis tasks from~\cite{AkshayCMZ23}. In the verification task a strategy is fixed, whereas in the synthesis task one also needs to compute the strategy. While~\cite{AkshayCMZ23} considered distributional safety specifications, we design more complex $\omega$-regular specifications.
\end{itemize}

%several tasks, primarily focusing on the \textbf{GridWorld} task for Policy Synthesis. Due to the structure of the grid and the placement of different cells, we defined two specifications: one for the probability of swarm robots reaching the target and another for reaching the target while avoiding specific paths. For these specifications, we incrementally expanded the grid until we reached the largest grid size that the solver could handle within a reasonable time. For the Policy Verification task, we referred to the \textbf{PageRank} example [Agrawal et al., 2015, Fig. 3] and tested the specifications mentioned in the paper. Additionally, we evaluated the Markov chain of the PageRank model for Implication specifications. Furthermore, by referring to [Chadha et al. 2011], we tested the pharmacokinetics model using two specifications. For testing both Policy Verification and Synthesis tasks, we considered the example from [link]. Despite the simplicity of the MDP, it exhibits complex behavior regarding convergence. After careful analysis, we formulated meaningful specifications.

%Notably, in all tests, we found that a template size of 1 and memoryless policies were sufficient. 

\noindent{\bf Results.} Our experimental results are shown in Table~\ref{tab:example}. Our results demonstrate that our prototype is able to solve a number of challenging verification and synthesis tasks for distributional $\omega$-regular specifications in MDPs, that were beyond the reach of all existing methods. This is achieved at runtimes that are comparable or even lower than those reported by earlier methods for distributional reachability and safety specifications in~\cite{AkshayCMZ23,AkshayCMZ24} on benchmarks of similar size. Hence, even though we consider a significantly more general class of distributional $\omega$-regular specifications, our algorithms do not lead to a significant computational overhead. Moreover, for all our strategy synthesis tasks, our prototype was able to compute {\em memoryless strategies} that lead to distributional specification satisfaction. This demonstrates the generality of relative completeness guarantees provided by our algorithms.

We also make some observations. First, as can be seen from runtimes reported in Table~\ref{tab:example}, the final SMT-solving step is computationally the most expensive step of our algorithms. Constraint generation took at most a few seconds in all cases. Second, we observe that strategy synthesis tasks are generally computationally more expensive, which is expected given that they require synthesizing both the strategy and the distributional certificate. However, in some cases the synthesis problems can also be solved more efficiently. This is demonstrated by the last $6$ experiments (CAV'23 in Table~\ref{tab:example}), where synthesis is achieved at lower runtimes due to our prototype being able to compute a simple memoryless strategy that was easier to verify, compared to the strategy considered in the verification task.

Finally, regarding the invariant size parameter, we used $N_I = 1$ because it was sufficient for all our benchmarks. We also ran our prototype tool with $N_I = 2$ on $12$ of the benchmarks and $8$ of them were solved within the timeout of 5 minutes. The timeouts are likely due to the larger size of the final system of constraints. Indeed, given more time, we expect our method can be effectively applied with larger template sizes as well.

%\small
\begin{table}[t]
    \centering
    \resizebox{\textwidth}{!}{
    \begin{tabular}{|c|c|c|c|c|c|c|c|c|}
        \hline
        Model & Specification & Task & Coeff \# & Const \# & 	PQ Coeff \# & Query \# & SMT time & Total time \\ 
        \hline
        GW (3*3) & G F "V5>=0.9" & Synth & $62$ & $41$ & $171$ & $34$ & $<2s$ & $6s$ \\ 
        GW (3*3) & G F "V5>=0.9" \& G "V4<=0.5" & Synth & $80$ & $45$  & $277$ & $49$ & $<5s$ & $<5s$ \\ 
        GW (4*4) & G F "V11>=0.9" & Synth & $121$ & $79$ & $286$ & $81$ &$<10s$ & $10s$ \\ 
        GW (4*4)   & G F "V11>=0.9" \& G "V9<=0.5" & Synth & $153$ & $83$ & $448$ & $87$ &  $12s$ & $13s$ \\ 
        GW (5*5)   & G F "V19>=0.9" & Synth & $198$ & $129$ & $435$ & $131$ & $302s$ & $303s$ \\ 
        \hline
        PageRank & F G "V2>0.2" & Verify  & $60$ & $13$ & $400$ & $35$ & $63s$ & $64s$ \\ 
        PageRank & G ( "0.2<=V0" → F "0.2<=V2") & Verify & $48$ & $13$ & $253$ & $22$ & $17s$ & $18s$
        \\
        PageRank & G ( "0.2<=V2" → F "0.2<=V2") & Verify & $48$ & $13$ & $253$ & $22$ & $8s$ & $8s$
        \\
        PageRank & G ( "0.2<=V3" → F "0.2<=V2") & Verify & $48$ & $13$ & $253$ & $22$ &  $44s$ & $45s$
        \\
        PageRank & G ( "0.2<=V4" → F "0.2<=V2") & Verify &
        $48$ & $13$ & $253$ & $22$ &  $5s$ & $6s$
        \\
        PageRank & G "V0>=0.2" | "V1>=0.2"  | "V2>=0.2"  & Verify & $60$ & $19$ & $569$ & $44$ & $136s$ & $137s$
        \\
        &  | "V3>=0.2" | "V4>=0.2" → F  "V2=1" & & & & & & & 
        \\
        PageRank & F "V2=1"→ G "V1<=0.2" & Verify & $144$ & $35$ & $630$ & $46$ &  $6s$ & $6s$
        \\
        \hline
        
        Pharmacokinetics & F "V4=1" & Verify  & $42$ & $9$ & $176$ & $14$ & $<1s$ & $<1s$ \\   
        Pharmacokinetics & G ("0.13<=V3<=0.2" | "0<=V3<0.13") & Verify & $56$ & $11$ & $365$ & $28$ & $102s$ & $102s$
        \\
        \hline
        CAV23~\cite{AkshayCMZ23} & G F "V1>=0.249" & Verify & $16$ & $7$ & $85$ & $13$ &  $<2s$ & $<2s$   \\ 
        CAV23~\cite{AkshayCMZ23} & G F "V1>=0.249" & Synth & $20$ & $14$  & $108$ & $16$ & $7s$ & $<2s$
        \\
        CAV23~\cite{AkshayCMZ23} & "V1>= 0.249" U "V2 >= 0.25" & Verify & $32$ & $13$ & $185$ & $22$ & $<5s$ & $7s$
        \\
        CAV23~\cite{AkshayCMZ23} & "V1>= 0.249" U "V2 >= 0.25" & Synth & $36$ & $20$ & $189$ & $25$ & $7s$ & $5s$
        \\
        CAV23~\cite{AkshayCMZ23} & "0.334>=V1>=0.332" U "V0=0.25" & Verify & $32$ & $17$ & $229$ & $24$ & $<4s$ & $4s$
        \\
        CAV23~\cite{AkshayCMZ23} &  "0.334>=V1>=0.332" U "V0=0.25" & Synth & $36$ & $20$ & $233$ & $28$ & $<1s$ & $<1s$  \\         
        \hline  
    \end{tabular}
    }
    \caption{For each experiment we report, from left to right, the 
    benchmark, specification, task (verification or synthesis), the number of coefficients, the number of constraints, the number of coefficients in PolyQEnt generated file (i.e.~after application of Farkas' lemma), the number of constraints in PolyQEnt generated file, SMT-solving time, and the total runtime.}
    \label{tab:example}
\end{table}

\section{Conclusion}\label{sec:conclusion}

In this paper, we considered distributional $\omega$-regular specifications in MDPs and addressed the problems of strategy verification and synthesis. We developed new notions of product distributional transition systems between an MDP and an NBA. We then introduced distributional certificates, using which we provided template-based synthesis algorithms for strategy verification and synthesis. Our experiments demonstrate the benefits and promise of our approach. As future work, we would like to go beyond MDPs and consider partially observable MDPs. Moreover, it would be interesting to lift the objectives from NBA to Rabin automata, where even the notion of distributional certificates is unclear. 

\bibliography{bib}

\end{document}